\theoremstyle{thmstyleone}%
\newtheorem{theorem}{Theorem}%  meant for continuous numbers
\newtheorem{corollary}[theorem]{Corollary}% 
\newtheorem{lemma}[theorem]{Lemma}%
\newtheorem{claim}[theorem]{Claim}%
\newtheorem{conjecture}[theorem]{Conjecture}%
\theoremstyle{thmstyletwo}%
\newtheorem{example}{Example}%
\newtheorem{remark}{Remark}%
\theoremstyle{thmstylethree}%
\renewcommand{\leq}{\leqslant}
\renewcommand{\geq}{\geqslant}
\newcommand{\cR}{\mathcal{R}}
\newcommand{\F}{\mathbb{F}}
\newcommand{\eqdef}{\triangleq}
\newcommand{\rc}{\mathrm{rc}}
\newcommand{\pal}{\mathrm{pal}}
\newcommand{\bs}{\textmd{-}}
\newif\ifshowproofs
\begin{document}

\title[Optimal Functional $2^{s-1}$-Batch Codes: Exploring New Sufficient Conditions]{Optimal Functional $2^{s-1}$-Batch Codes: Exploring New Sufficient Conditions}

%%=============================================================%%
%% Prefix	-> \pfx{Dr}
%% GivenName	-> \fnm{Joergen W.}
%% Particle	-> \spfx{van der} -> surname prefix
%% FamilyName	-> \sur{Ploeg}
%% Suffix	-> \sfx{IV}
%% NatureName	-> \tanm{Poet Laureate} -> Title after name
%% Degrees	-> \dgr{MSc, PhD}
%% \author*[1,2]{\pfx{Dr} \fnm{Joergen W.} \spfx{van der} \sur{Ploeg} \sfx{IV} \tanm{Poet Laureate} 
%%                 \dgr{MSc, PhD}}\email{iauthor@gmail.com}
%%=============================================================%%

\author[1]{\fnm{Lev} \sur{Yohananov}}\email{lev.yo@migal.org.il}
\author[2]{\fnm{Isaac Barouch} \sur{Essayag}}\email{isaac.es@migal.org.il}

\affil[1]{\orgdiv{The Bioinformatics and Genomics Center}, \orgname{MIGAL -- Galilee Research Institute}, \orgaddress{\city{Kiryat Shmona}, \postcode{11016}, \country{Israel}}}
\affil[2]{\orgdiv{The Bioinformatics and Genomics Center}, \orgname{MIGAL -- Galilee Research Institute}, \orgaddress{\city{Kiryat Shmona}, \postcode{11016}, \country{Israel}}}

%%==================================%%
%% sample for unstructured abstract %%
%%==================================%%

\abstract{
A functional $k$-batch code of dimension $s$ consists of $n$ servers storing linear combinations of $s$ linearly independent information bits. 
These codes are designed to recover any multiset of $k$ requests, each being a linear combination of the information bits, by $k$ disjoint subsets of servers. 
A recent conjecture suggests that for any set of $k = 2^{s-1}$ requests, the optimal solution requires $2^s-1$ servers.
This paper shows that the problem of functional $k$-batch codes is equivalent to several other problems. 
Using these equivalences, we derive sufficient conditions that improve understanding of the problem and enhance the ability to find the optimal solution.
}

\keywords{Batch codes, Simplex codes, Finite rings, Private information retrieval (PIR).}

%%\pacs[JEL Classification]{D8, H51}

\pacs[MSC Classification]{68B30,94B60, 05D40,13M10 }  

\maketitle

\section{Introduction}

A \emph{batch} code encodes a string $x$ of length $s$ into $n$ strings stored on servers. 
This encoding ensures that any batch request for $m$ distinct bits (or more generally, symbols) from $x$ can be decoded by reading at most $t$ bits from each server. 
Batch codes were first introduced by Ishai et al.~\cite{Ishai2004}, motivated by various applications in load balancing for storage and cryptographic protocols. 
Later, batch codes were extended to the \emph{multiset} batch code version, where the number of requests may be greater than 1. 

A \textit{functional $k$-batch code} is a code that ensures that a multiset of $k$ linear combinations of the $s$ information bits (or generally, symbols) can be decoded by reading a bit from each server. 
Given $s$ and $k$, the goal is to find the smallest $n$ for which a functional $k$-batch code exists.
This problem was considered in several papers;~\cite{YamawakiKamabeLu2017,YohananovYaakobi2022,ZhangEtzionYaakobi2020}.

Formally, an $FB\bs(n,s,k)$ functional $k$-batch  code (and in short $FB\bs(n,s,k)$ code) of dimension $s$ consists of $n$ servers storing linear combinations of $s$ linearly independent information bits. 
Any multiset of size $k$ of linear combinations from the linearly independent information bits, can be recovered by $k$ disjoint subsets of servers.
For example, assume that the binary string $x$ is of length $s=2$, denoted by $x = (x_1, x_2)$. For $n=3$, let  $s_1, s_2, s_3$ be binary servers store the following linear combinations:
\begin{align*}
      s_1 &= x_1, \\
      s_2 &= x_2, \\
      s_3 &= x_1 + x_2.
\end{align*}
Any $k=2$ linear combination requests must be decoded. 
Indeed, the request $(x_1, x_1)$ is decoded by $(s_1, s_2 + s_3)$, since $x_1=s_1,x_2=s_2+s_3$, and the request $(x_1, x_2)$ is decoded by $s_1, s_2$, since $x_1=s_1,x_2=s_2$. 
The other cases are left to the reader to verify readily.

The functional $k$-batch problem can be represented in an alternative, more convenient way.
The $k$ requests will be vectors $v_0, v_1, \dots, v_{k-1}$ (not necessarily distinct) of $s$ bits (or generally symbols).
The $n$ servers $s_1, s_2, \dots, s_n$ are also vectors of $s$ bits.
The goal is to find a partition of servers to sets $S_1,S_2,\dots,S_k$ such that the sum (or generally linear combination) of all servers in set $i$ equals the \emph{request vector} $v_i$.  
Every set $S_i$ is called a \emph{recovery set}.
A partition $S_1, S_2,\dots, S_k$ is called a \emph{solution} for the $k$ request vectors. 
In the previous example, the servers are:
\begin{align*}
    s_1 &= (1, 0), \\
    s_2 &= (0, 1), \\
    s_3 &= (1, 1),
\end{align*}
and the requests might be $v_1 = (1, 0)$, $v_2 = (1, 0)$.

The value $FB(s, k)$ is defined to be the minimum number of servers required for the existence of an $FB\bs(n,s,k)$ code.
Codes achieving $n=FB(s, k)$ are called \emph{optimal}.
We will see later that this $FB\bs(n=3,s=2,k=2)$ code presented above is optimal.
An open conjecture says that given $s$ and $k=2^{s-1}$, it holds that $FB(s, 2^{s-1}) = 2^s-1$.
\begin{conjecture}\label{conj_1}
    For any $s$, the value of $FB(s, 2^{s-1})$ is equal to $2^s - 1$.
\end{conjecture}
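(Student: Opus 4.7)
The plan is to split the conjecture into the lower and upper bounds and tackle them separately.

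For the lower bound $FB(s, 2^{s-1}) \geq 2^s - 1$, I would run an adversarial-request counting argument. Fix any $FB\bs(n, s, 2^{s-1})$ code and, for each non-zero vector $v \in \mathbb{F}_2^s$, let $c_v$ denote the number of servers equal to $v$. Apply the code to the request multiset consisting of $v$ repeated $2^{s-1}$ times. In any valid collection of disjoint recovery sets, each singleton recovery set must consist of one copy of the server $v$, so the number $t$ of singletons satisfies $t \leq c_v$; every other recovery set has size at least $2$ (since $v \neq 0$). The total number of servers used is therefore at least $t + 2(2^{s-1} - t) = 2^s - t$, and since this total is at most $n$, one obtains $n \geq 2^s - t \geq 2^s - c_v$, i.e., $c_v \geq 2^s - n$. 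Summing over the $2^s - 1$ non-zero vectors and using $\sum_{v \neq 0} c_v \leq n$ yields $n \geq (2^s - 1)(2^s - n)$, which rearranges to $n \geq 2^s - 1$.

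For the upper bound $FB(s, 2^{s-1}) \leq 2^s - 1$, I would exhibit the simplex code: the $n = 2^s - 1$ servers are exactly all the non-zero vectors of $\mathbb{F}_2^s$. It then suffices to show that for every multiset $\{v_1, \dots, v_{2^{s-1}}\}$ of vectors in $\mathbb{F}_2^s$ there exist pairwise disjoint subsets $R_1, \dots, R_{2^{s-1}} \subseteq \mathbb{F}_2^s \setminus \{0\}$ with $\sum_{w \in R_i} w = v_i$ for each $i$. My primary attack would be induction on $s$: split $\mathbb{F}_2^s$ as $H \sqcup (u + H)$ along a hyperplane, classify the $2^{s-1}$ requests by their last coordinate, pair up requests with an identical top bit, and recurse on two sub-problems in $\mathbb{F}_2^{s-1}$ while using the $2^{s-1}$ servers in $u+H$ to correct any coset imbalance. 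A complementary attack is to exploit the equivalences established elsewhere in the paper to restate the desired partition property in a matching or ring-decomposition formulation, where the sufficient conditions proved in the body of the paper may be verifiable directly for the simplex code.

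The main obstacle, and the reason the conjecture remains open, is the upper bound in its tight regime. The all-$v$ adversarial request already consumes every one of the $2^s - 1$ servers, so any inductive decomposition has essentially no slack: every server must end up in exactly one pre-determined recovery set, and the inductive step must guarantee a perfect match between the two sub-problems on $H$ and $u + H$. Moreover, $k = 2^{s-1}$ doubles with each induction step, so the recursion does not naively simplify the problem size. This is precisely why ad hoc verification has succeeded only for small $s$ and why the present paper focuses on equivalences and sufficient conditions rather than a direct resolution.
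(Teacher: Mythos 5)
Conjecture~\ref{conj_1} is stated as an open problem in the paper and is not proved there: the paper reduces the hard (upper-bound) direction to Conjecture~\ref{conj_2}, recasts it via the polynomial $f_v$ and the ring $R$ (Theorem~\ref{theo:5}, Theorem~\ref{theo:main2}), develops sufficient conditions in Corollary~\ref{cor:10}, and verifies only $s=2$ and $s=3$. So there is no hidden proof against which to compare your argument; what can be evaluated is whether you correctly separate the known from the unknown, and whether the parts you do argue are sound.

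Your lower-bound counting is sound. For the all-$v$ request multiset, singleton recovery sets must be servers storing $v$, so with $t$ singletons the disjoint recovery sets consume at least $t + 2(2^{s-1}-t) = 2^s - t \ge 2^s - c_v$ servers, whence $c_v \ge 2^s - n$ for every nonzero $v$; summing and using $\sum_{v\ne 0} c_v \le n$ gives $2^s n \ge (2^s-1)2^s$, i.e.\ $n \ge 2^s - 1$. That is the known, easy half. The genuine gap is the upper bound, and you name it yourself: you propose the simplex code (all nonzero vectors of $\mathbb{F}_2^s$ as servers) and sketch an induction over a hyperplane split $H \sqcup (u+H)$, but you do not carry out the inductive step, and the obstructions you record — the all-equal multiset already uses every server exactly once, so there is zero slack for an imperfect match; the recursion does not shrink $k$ relative to the ambient group — are exactly why such a naive induction stalls. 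The upper bound is precisely Conjecture~\ref{conj_2}, which the paper also leaves open in general; your proposal therefore establishes the easy inequality and candidly does not close the conjecture, which is the correct assessment of the problem's status.
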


%However, the proof is not explicit, i.e., we still don't have any explicit algorithm that shows how to construct such decomposition of servers for given requests.
In~\cite{YohananovYaakobi2022}, it was shown that under the conditions $v_i \neq 0$ and $\sum_{i=1}^k v_i = 0$, a solution can be efficiently computed if all $v_i$'s are equal or if each $v_i$ has an odd Hamming weight. 
Additionally, an optimal algorithm was provided for the case where each element of $\mathbb{F}_2^s \setminus \{0\}$ appears exactly twice as a server, i.e., $n = 2^{s+1} - 2$ and $k = 2^s$.
In~\cite{Kovacs2023}, it was proven that a solution exists if the number of distinct values among the difference vectors is at most $s - 2\log s - 1$.

%Finding such an algorithm for Conjecture~\ref{conj_2} and for the King's problem is very challenging, and left for future research.

In this paper, we use an approach similar to that in~\cite{YohananovYaakobi2022}, namely, making all possible non-zero vectors of length $s$ serve as servers (which are exactly $2^s-1$ servers). 
For convenience, we add a virtual zero-server (a server equal to the zero vector of length $s$), so overall $\{s_1, s_2, \dots, s_n\} = \mathbb{F}_2^s$.
Generally, a response to any request can use any number of servers. However, our approach demonstrates that every request will use exactly two servers. 
In other words, assuming $\sum_{i=1}^k v_i = 0$, our goal is to show that every recovery set satisfies $|S_i| = 2$. 
This setup aligns with the conjecture proposed by Balister, Gy\H{o}ri, and Schelp~\cite{BalisterGyoriSchelp2011}.

\begin{conjecture}\label{conj_2}
    Let $s \geq 2, q=2^s$ be an integer and $k = 2^{s-1}$.
    If the nonzero difference vectors $v_1, v_2, \dots, v_k$ are given in $\mathbb{F}_q$ such that $\sum_{i=1}^k v_i = 0$ (and the $v_i$'s are not necessarily distinct), then $\mathbb{F}_2^s$ can be subdivided into disjoint pairs $\{x_i, y_i\}$ such that $x_i + y_i = v_i$ holds for every $i$.
\end{conjecture}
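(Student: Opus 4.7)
The plan is to try to prove Conjecture \ref{conj_2} by induction on $s$, recasting the statement as an edge-labelled perfect matching problem on the complete graph with vertex set $\mathbb{F}_2^s$: assign the label $x+y$ to every edge $\{x,y\}$, and look for a perfect matching whose multiset of labels equals $V=\{v_1,\ldots,v_k\}$. The base case $s=2$ is immediate: the vanishing-sum condition with two nonzero summands forces $v_1 = v_2$, and $\mathbb{F}_2^2$ admits two disjoint pairs summing to any prescribed nonzero vector.

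For the inductive step, I would split the ambient space along a hyperplane. Choose a nonzero $w \in \mathbb{F}_2^s$ and a hyperplane $H$ with $w \notin H$, so that $\mathbb{F}_2^s = H \sqcup (w+H)$. Partition the labels into $V_0 = V \cap H$ and $V_1 = V \setminus H$; the vanishing-sum hypothesis forces $|V_1| = 2m$ to be even. Pairs with a $V_1$-label must cross between the two cosets, consuming $m$ elements from each side, while pairs with a $V_0$-label are confined to a single coset, partitioning into sub-multisets $V_0^H$ and $V_0^{w+H}$ of necessary sizes $|V_0^H| = |V_0^{w+H}| = 2^{s-2} - m$ by a direct count. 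After translating $w+H$ back to $H$, each of these two sub-problems lives in dimension $s-1$, so one is tempted to invoke the inductive hypothesis on each side.

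The main obstacle — and the reason why this naive reduction does not close the induction — is twofold. First, the residual sets of elements to be paired on each side are proper subsets of $H$ and $w+H$ of size $2^{s-1}-m$ rather than full cosets, so Conjecture \ref{conj_2} as stated does not apply directly. Second, the residual label multisets $V_0^H$ and $V_0^{w+H}$ need not sum to zero individually, since the crossing pairs from $V_1$ contribute to each coset in a way that only cancels globally in $\mathbb{F}_2^s$. Overcoming both issues requires a flexibility lemma asserting that the hyperplane $H$, the direction $w$, and the crossing assignment can be chosen jointly so that the two residuals simultaneously become genuine $(s-1)$-dimensional instances of the conjecture. This is exactly where the equivalences and new sufficient conditions developed in this paper should be brought to bear: they are designed to cover the irregular configurations on which the naive hyperplane reduction stalls, complementing the partial results of \cite{YohananovYaakobi2022} and \cite{Kovacs2023}.
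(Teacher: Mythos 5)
The statement you are attacking is Conjecture~\ref{conj_2}, which the paper does \emph{not} prove; it remains open. The paper's contribution is a reformulation via the polynomial $f_v$ and the Combinatorial Nullstellensatz (Theorem~\ref{theo:5}), an analysis of its coefficients $g_j$ and the auxiliary polynomials $h_j$ (Lemma~\ref{lemma:12}, Corollary~\ref{cor:10}), and a resolution of the cases $s=2,3$ only. Your hyperplane-induction sketch is a genuinely different route from anything the paper does, but, as you yourself flag, it is not a proof: the induction does not close, and the two obstructions you identify are real. After removing the $2m$ endpoints of the crossing pairs, what remains in each coset is a proper subset of size $2^{s-1}-2m$, not a full copy of $\mathbb{F}_2^{s-1}$, so the inductive hypothesis does not apply; and the residual label multisets $V_0^H$ and $V_0^{w+H}$ must each sum to the sum of the elements consumed in that coset by the crossing pairs, which is generally a nonzero element of $H$ and not under your control, so the zero-sum hypothesis of the conjecture fails on each side.

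The closing gesture that the paper's equivalences and sufficient conditions supply the needed ``flexibility lemma'' is not substantiated and, on inspection, unlikely to hold as stated: the conditions in Corollary~\ref{cor:8} (all $v_i$ equal, all of odd Hamming weight, or few distinct values) and in Corollary~\ref{cor:10} (existence of a nonzero $h_j$ with special independence or degree-$\leq 1$ structure) concern the shape of the label vector $v$, not arbitrary proper-subset domains with a nonzero residual sum, which is what your reduction produces. What your approach actually requires is a strictly stronger inductive statement, namely a solvability criterion for pairing an arbitrary even-sized subset $T\subseteq\mathbb{F}_2^{s-1}$ with a prescribed multiset of $|T|/2$ nonzero labels summing to $\sum_{t\in T}t$, and that strengthened statement would have to be established independently before the induction can close. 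Your base case $s=2$ and the parity bookkeeping $|V_1|=2m$, $|V_0^H|=|V_0^{w+H}|=2^{s-2}-m$ are correct; the gap is entirely in the inductive step.
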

In this paper, we will show that proving Conjecture~\ref{conj_2} also proves Conjecture~\ref{conj_1}.

%In this paper, we prove that Conjecture~\ref{conj_2} is affirmative, i.e., we will show that such codes exist. 
%Combining it with the result from~\cite{YohananovYaakobi2022}, it is deduced that Conjecture~\ref{conj_1} holds as well.

An equivalent problem can be formulated as follows: Given non-zero vectors 
$v = (v_1, v_2, \dots, v_k) \in \mathbb{F}_q^k$ for $k = 2^{s-1}, q = 2^s$, such that $\sum_{i=1}^k v_i = 0$, 
determine whether there exist $\alpha_1, \alpha_2, \dots, \alpha_k \in \mathbb{F}_q$ such that the matrix
\[
H = \begin{bmatrix}
    1 & \dots & 1 & 1 & \dots & 1 \\
    \alpha_1 & \dots & \alpha_k & \alpha_1 + v_1 & \dots & \alpha_k + v_k \\
    \alpha_1^2 & \dots & \alpha_k^2 & (\alpha_1 + v_1)^2 & \dots & (\alpha_k + v_k)^2 \\
    \vdots & \ddots & \vdots & \vdots & \ddots & \vdots \\
    \alpha_1^{k-1} & \dots & \alpha_k^{k-1} & (\alpha_1 + v_1)^{k-1} & \dots & (\alpha_k + v_k)^{k-1}
\end{bmatrix}
\]
has full rank.
If we treat $\alpha_i$'s as variables $x_i$, this matrix has a characteristic polynomial
\[
f_v(x_1, \dots, x_k) = v_1v_2\cdots v_k \prod_{1 \leq i < j \leq k} (x_i + x_j)(x_i + x_j + v_i)(x_i + x_j + v_j)(x_i + x_j + v_i + v_j).
\]
As shown in~\cite{MacWilliamsSloane}, $H$ is non-singular if and only if we can find 
$\alpha_1, \alpha_2, \dots, \alpha_k \in \mathbb{F}_q$ such that $f_v(\alpha_1, \alpha_2, \dots, \alpha_k) \neq 0$. 
We will show that Conjecture~\ref{conj_2} is true if and only if we can find 
$\alpha_1, \alpha_2, \dots, \alpha_k \in \mathbb{F}_q$ such that $f_v(\alpha_1, \alpha_2, \dots, \alpha_k) \neq 0$. 
This polynomial can be expressed by
\begin{align*}
    f_v(x_1, \dots, x_k) = v_1v_2\cdots v_k \prod_{1 \leq i < j \leq k} (x_i + x_j)^4 + \text{Terms of smaller degree},
\end{align*}
where in the $\prod_{1 \leq i < j \leq k} (x_i + x_j)^4$ we might focus on the monomial $\prod_{i=1}^k x_i^{q-2}$.
By Dyson's conjecture~\cite{Dyson1962}, it is known that the coefficient of this monomial is $\binom{q}{2, 2, \dots, 2}$. 
However, since the characteristic of $\mathbb{F}_q$ is $2$, this monomial vanishes, making the problem significantly more challenging.

In this paper, we study the polynomial $f_v$, providing several sufficient conditions for solving Conjecture~\ref{conj_2}. 
In~\cite{YamawakiKamabeLu2017}, a proof was presented for the cases $s=2,3,4$. 
However, the proof is somewhat cumbersome, as it requires the reader to go through many special cases. 
We will show that our techniques solve this conjecture for $s=2$ and $s=3$ in a more elegant manner. 
Furthermore, as will be explained later, our technique can also provide a solution for larger values of $s$ using a computer search.

The paper is organized as follows. 
In Section~\ref{sec:pre} we give all the necessary notation and definitions. 
In Section~\ref{sec:gen_case} we analyze several properties of $f_v$ and give sufficient conditions for solving Conjecture~\ref{conj_2}. 
Section~\ref{sec:2_3} provides a solution for $s=2,3$ cases. 
We conclude in Section~\ref{sec:conc} with a summary of the results and some open questions.

\section{Preliminaries}
\label{sec:pre}

In this section, we will show how to represent the problem described in Conjecture~\ref{conj_2} in another way that uses a finite ring of multivariable polynomials.
Throughout this paper, let $q = 2^s$, and $v_1, v_2, \dots, v_k$ be non-zero binary not necessarily distinct vectors of length $s$.
Let $k=2^{s-1}$.
A solution for $v=(v_1, v_2, \dots, v_k)$ is a partition of servers $\mathbb{F}_q$ into $S_1, S_2, \dots, S_k$ where each recovery set satisfies $|S_i| = 2$.
Let $f_v \in \mathbb{F}_q[x_1, x_2, \dots, x_k]$ be the following polynomial,
\[
f_v(x_1, \dots, x_k) = v_1v_2\cdots v_k\prod_{1 \leq i < j \leq k} (x_i + x_j)(x_i + x_j + v_i)(x_i + x_j + v_j)(x_i + x_j + v_i + v_j).
\]
 Note that \( f_v \) depends on \( v = (v_1, v_2, \dots, v_k) \).
    
Let $I = \langle x_1^q + x_1, x_2^q + x_2, \dots, x_k^q + x_k \rangle$ be an ideal of polynomials over $\mathbb{F}_q[x_1, x_2, \dots, x_k]$.
Let $R = \frac{\mathbb{F}_q[x_1, x_2, \dots, x_k]}{I}$ be a ring of polynomials and denote by ``$\equiv_R$" a congruence equivalence operator in $R$.

Noga Alon shows the following theorem in \cite{Alon1999}:
\begin{theorem}\label{theo:3}
     Let $\mathbb{F}$ be an arbitrary field, and let $f = f(x_1, \dots , x_n)$ be a polynomial in $\mathbb{F}[x_1, \dots, x_n]$.
     Let $S_1, \dots , S_n$ be nonempty subsets of $\mathbb{F}$ and define $g_i(x_i) = \prod_{s \in S_i} (x_i - s)$.
     If $f$ vanishes over all the common zeros of $g_1, \dots , g_n$ (that is, if $f(s_1, \dots , s_n) = 0$ for all $s_i \in S_i$), then there are polynomials
     $h_1, \dots, h_n \in \mathbb{F}[x_1, \dots, x_n]$ satisfying $\deg(h_i) \leq \deg(f) - \deg(g_i)$ so that
        \begin{align*}
            f = \sum_{i=1}^n h_i g_i.
        \end{align*}
        Moreover, if $f, g_1, \dots, g_n$ lie in $R[x_1, \dots , x_n]$ for some subring $R$ of $\mathbb{F}$, then there are polynomials  $h_i \in R[x_1, \dots , x_n]$ as above.
\end{theorem}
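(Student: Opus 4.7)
The plan is to follow Alon's classical reduction argument. Write $t_i=|S_i|$, so $g_i(x_i)=x_i^{t_i}+g_i^{*}(x_i)$ with $\deg g_i^{*}<t_i$ and all coefficients in $R$. The identity $x_i^{t_i}=g_i(x_i)-g_i^{*}(x_i)$ lets us reduce any occurrence of $x_i^{t_i}$, yielding a congruence $x_i^{t_i}\equiv -g_i^{*}(x_i)\pmod{g_i(x_i)}$. The goal is to show that this reduction process produces the desired representation $f=\sum_i h_i g_i$ when $f$ vanishes on $\prod_i S_i$.

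First I would establish the reduction lemma: for any $f$, there exist $h_1,\dots,h_n$ and a polynomial $\tilde f$ with $\deg_{x_i}(\tilde f)<t_i$ for every $i$, such that $f=\tilde f+\sum_{i=1}^n h_i g_i$ and $\deg(h_i)\leq \deg(f)-\deg(g_i)$. Proceed monomial by monomial. For a monomial $c\,x_1^{a_1}\cdots x_n^{a_n}$ with some $a_i\geq t_i$, use
\[
c\,x_1^{a_1}\cdots x_n^{a_n}
=\bigl(c\,x_1^{a_1}\cdots x_i^{a_i-t_i}\cdots x_n^{a_n}\bigr)\cdot g_i(x_i)
\;-\;c\,x_1^{a_1}\cdots x_i^{a_i-t_i}g_i^{*}(x_i)\cdots x_n^{a_n}.
\]
The first term contributes to $h_i g_i$ with a summand of total degree $a_1+\cdots+a_n-t_i\leq\deg(f)-\deg(g_i)$. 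The second term is a sum of monomials whose $x_i$-degree is at most $a_i-1$ and whose total degree does not exceed $\deg(f)$. Iterating (in any order), the $x_i$-degrees of the remaining monomials strictly decrease, so the process terminates in finitely many steps. All contributions to each $h_i$ thus have total degree $\leq\deg(f)-\deg(g_i)$, and the whole procedure uses only operations in $R$, proving the last sentence of the theorem.

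Next I would show $\tilde f=0$. Since $g_i(s_i)=0$ for $s_i\in S_i$, the assumption $f(s_1,\dots,s_n)=0$ on $\prod_i S_i$ transfers to $\tilde f(s_1,\dots,s_n)=0$ on the same grid. To finish, I would prove by induction on $n$ the auxiliary fact: a polynomial $p\in\mathbb{F}[x_1,\dots,x_n]$ with $\deg_{x_i}(p)<|S_i|$ for every $i$, which vanishes on $\prod_i S_i$, must be the zero polynomial. For $n=1$ this is the standard one-variable root bound. For $n>1$, view $p$ as a polynomial in $x_n$ with coefficients in $\mathbb{F}[x_1,\dots,x_{n-1}]$; for any $(s_1,\dots,s_{n-1})\in S_1\times\cdots\times S_{n-1}$ the specialization has degree $<|S_n|$ and $|S_n|$ roots, hence is identically zero. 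Therefore every coefficient vanishes on $S_1\times\cdots\times S_{n-1}$ and the induction hypothesis finishes. Applying this to $\tilde f$ gives $\tilde f=0$ and hence $f=\sum_i h_i g_i$ with the required degree bounds.

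The main obstacle is simply the bookkeeping: one must verify both that the monomial-by-monomial reduction terminates and that at every step the contribution placed into $h_i$ respects the bound $\deg(h_i)\leq\deg(f)-\deg(g_i)$. The key invariant is that the reduction never increases the total degree of any monomial and strictly decreases the $x_i$-degree of the monomial being reduced; once this is verified, the rest is an easy induction on the number of variables.
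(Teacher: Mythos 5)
Your proof is correct and essentially reconstructs Alon's original argument from \cite{Alon1999}; the paper does not give its own proof of this theorem, it simply cites Alon, so the relevant comparison is to that source, and you match it. Two small remarks for polish. First, the termination claim ``the $x_i$-degrees of the remaining monomials strictly decrease'' is slightly informal: reducing in $x_i$ can create new monomials whose $x_j$-degree for $j\neq i$ is larger than before (if $g_i^*$ had lower $x_i$-degree, the total degree drops, but not each coordinate degree). A clean termination argument is to observe that each reduction step strictly decreases the total degree of the monomial being replaced (since $\deg g_i^* < t_i$), or alternatively to use a well-ordering on the multiset of exponent vectors; either way the process stops, and you already have the right invariants in hand. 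Second, you should state explicitly that $g_i$ is monic, which is what makes the reduction work without introducing denominators and is precisely why the ``subring $R$'' clause in the theorem holds: the only operations used are ring operations in $R$ together with subtraction of the monic leading term. With those two sentences tightened, the argument is complete; the auxiliary lemma (a polynomial of $x_i$-degree less than $|S_i|$ in each variable that vanishes on the grid $\prod_i S_i$ must be identically zero) is proved exactly as in Alon's paper by induction on the number of variables, and the degree bound $\deg(h_i)\leq\deg(f)-\deg(g_i)$ is correctly tracked.
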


We now extend this result to a similar setting where $x_1$ is dependent on other variables rather than constants.
\begin{theorem}\label{theo:4Lev}
     Let $\mathbb{F}$ be an arbitrary field, and let $f = f(x_1, \dots , x_n)$ be a polynomial in $\mathbb{F}[x_1, \dots, x_n]$.
   %  Let $p\eqdef p(x_2,\dots,x_n)=\sum^n_{i=2}x_i$.
     Let $S_1,S_2, \dots , S_{n}$ be nonempty subsets of $\mathbb{F}$.
     Define $g_1(x_1,\dots,x_n) = \prod_{\alpha \in S_1}(x_1-(\sum^n_{i=2}x_i+\alpha))$, and for all 
     $2\leq i\leq n$, $g_i(x_i) = \prod_{s \in S_i} (x_i - s)$.
     If $f$ vanishes over all the common zeros of $g_1, \dots , g_n$ (that is, if $f(s_1, \dots , s_n) = 0$ for all $s_i \in S_i,2\leq i \leq n$ and $s_1-\sum^n_{i=2}s_i\in S_1$), then there are polynomials
     $h_1, \dots, h_n \in \mathbb{F}[x_1, \dots, x_n]$ satisfying $\deg(h_i) \leq \deg(f) - \deg(g_i)$ so that
        \begin{align*}
            f = \sum_{i=1}^n h_i g_i.
        \end{align*}
\end{theorem}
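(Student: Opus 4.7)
The plan is to reduce Theorem~\ref{theo:4Lev} directly to Alon's Theorem~\ref{theo:3} by a linear change of variables that turns the only multivariate polynomial $g_1$ into a univariate one. The key observation is that $g_1$ depends on $x_2,\dots,x_n$ only through the single linear form $\sigma \eqdef \sum_{i=2}^n x_i$, so after shifting the $x_1$-axis by $\sigma$ every $g_i$ becomes a polynomial in a single new variable and Alon's result applies verbatim.

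Concretely, I would introduce a new variable $y_1 = x_1 - \sigma$ (equivalently $x_1 = y_1 + \sigma$) and put
\[
\tilde{f}(y_1, x_2, \dots, x_n) \eqdef f\bigl(y_1 + \sigma,\, x_2, \dots, x_n\bigr), \qquad \tilde{g}_1(y_1) \eqdef \prod_{\alpha \in S_1}(y_1 - \alpha),
\]
while keeping $\tilde{g}_i \eqdef g_i$ for $2 \le i \le n$. Because the substitution $x_1 \mapsto y_1 + \sigma$ is linear, $\deg \tilde{f} \le \deg f$, and by construction $g_1(x_1,\dots,x_n) = \tilde{g}_1(x_1-\sigma)$ with $\deg g_1 = |S_1| = \deg \tilde{g}_1$. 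I would then check that the vanishing hypothesis transfers: any common zero $(t_1, s_2, \dots, s_n)$ of $\tilde{g}_1, g_2, \dots, g_n$ satisfies $t_1 \in S_1$ and $s_i \in S_i$ for $i \ge 2$; setting $s_1 \eqdef t_1 + \sum_{i=2}^n s_i$ gives $s_1 - \sum_{i=2}^n s_i = t_1 \in S_1$, so by assumption $f(s_1, s_2, \dots, s_n) = 0$, hence $\tilde{f}(t_1, s_2, \dots, s_n) = 0$.

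Now Theorem~\ref{theo:3} applies to $\tilde{f}$ together with the univariate polynomials $\tilde{g}_1, g_2, \dots, g_n$ and produces $\tilde{h}_1, \dots, \tilde{h}_n$ with $\deg \tilde{h}_i \le \deg \tilde{f} - \deg \tilde{g}_i$ and $\tilde{f} = \tilde{h}_1 \tilde{g}_1 + \sum_{i=2}^n \tilde{h}_i g_i$. Substituting back $y_1 = x_1 - \sigma$ and defining $h_i(x_1,\dots,x_n) \eqdef \tilde{h}_i(x_1 - \sigma,\, x_2, \dots, x_n)$ recovers the desired decomposition $f = h_1 g_1 + \sum_{i=2}^n h_i g_i$, and since the back-substitution is again linear we obtain $\deg h_i \le \deg \tilde{h}_i \le \deg f - \deg g_i$. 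I do not anticipate a genuine obstacle in this argument; the only mildly delicate point is the degree bookkeeping, namely verifying that a linear shift preserves total degree and that $\deg g_1$, measured as a multivariate polynomial, equals $\deg \tilde{g}_1 = |S_1|$. Once the right change of variables is spotted, the extension is essentially a one-line reduction to the classical statement.
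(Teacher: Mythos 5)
Your proposal is correct and matches the paper's proof exactly: the paper's entire argument is ``Substitute $y_1 = x_1-\sum^n_{i=2}x_i$, and use Theorem~\ref{theo:3}.'' You have simply carried out the degree bookkeeping and the back-substitution that the paper leaves implicit.
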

\begin{proof}
    Substitute $y_1 = x_1-\sum^n_{i=2}x_i$, and use Theorem~\ref{theo:3}.
\end{proof}

We illustrate this result with an example.
\begin{example}\label{examp:1}
    Let $f(v_1,v_2)=v_1v_2(v_1^2v_2+v_1v^2_2+1)$ be a polynomial in $GF(4)[v_1,v_2]$.
    It can be verified that this polynomial vanishes if $v_1\neq v_2$.
    Thus if we take arbitrary $v_2 \in GF(4)$ and after that we take $v_1\in GF(4)\setminus \{v_2\}$ then $f(v_1,v_2)=0$.
    According to this, we define $S_2 = GF(4) $ and  $S_1 = GF(4)\setminus\{0\}$, and two polynomials, $g_2(v_1,v_2)=v^4_2+v_2$,
    \begin{align*}
        g_1(v_1,v_2)=(v_1+v_2+1)(v_1+v_2+\alpha)(v_1+v_2+\alpha^2)= (v_1+v_2)^3+1.
    \end{align*}
    Due to Theorem~\ref{theo:4Lev}, $f$ can be represented as a linear combination of $g_1,g_2$.
     Indeed the polynomial $f$ can be represented as follows
    \begin{align*}
        f(v_1,v_2) = \Big((v_1+v_2)^3+1\Big)v^2_2+\Big(v^4_2+v_2\Big)(v_1+v_2).
    \end{align*}
    \begin{proof}
    \begin{align*}
        f(v_1,v_2) &= \Big((v_1+v_2)^3+1\Big)v^2_2+\Big(v^4_2+v_2\Big)(v_1+v_2)\\
                   &= (v_1^3+v_2^3+v_1^2v_2+v_1v_2^2+1)v_2^2+v_2^5+v_2^2+v_1v_2^4+v_1v_2\\
                   & = v_1^3v_2^2+v_2^5+v_1^2v_2^3+v_1v_2^4+v_2^2+v_2^5+v_2^2+v_1v_2^4+v_1v_2\\
                   & = v_1v_2(v_1^2v_2+v_1v^2_2+1).
    \end{align*}
    \end{proof}
\end{example}

%Remark: %The theorem also applies to any arbitrary polynomial $p(x_2, \dots, x_n) \in \mathbb{F}_q[x_2, \dots, x_k]$. 
%However, for this work, we present only the private case of the theorem. 
%In the proof of Theorem~\ref{theo:3}, the constructed algorithm ensures that no variable $x_i$ has an exponent $t \geq q$ in any of the $h_i$'s. 
%This property also holds for Theorem~\ref{theo:4Lev} and will be used later.
%The following theorem links between Conjecture~\ref{conj_1}, Conjecture~\ref{conj_2}, and the Vandermonde matrix problem. 

\begin{remark}
    In the proof of Theorem~\ref{theo:3}, the constructed algorithm ensures that no variable $x_i$ has an exponent $t_i \geq |S_i|+1$ in any of the $h_i$'s. 
    This property also holds for Theorem~\ref{theo:4Lev} and will be used in Section~\ref{sec:gen_case}.
\end{remark}

Another useful theorem is shown and will be used in this paper.
\begin{theorem}\label{theo:3_1}
    Let $F$ be an arbitrary field, and let $f = f(x_1, \dots, x_n)$ be a polynomial in $F[x_1, \dots, x_n]$. 
Suppose the degree $\deg(f)$ of $f$ is $\sum_{i=1}^n t_i$, where each $t_i$ is a nonnegative integer, and suppose the coefficient of $\prod_{i=1}^n x_i^{t_i}$ in $f$ is nonzero. Then, if $S_1, \dots, S_n$ are subsets of $F$ with $|S_i| > t_i$, there exist $s_1 \in S_1, s_2 \in S_2, \dots, s_n \in S_n$ such that 
\[
f(s_1, \dots, s_n) \neq 0.
\]
\end{theorem}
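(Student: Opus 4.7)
The plan is to prove this by contradiction, invoking Theorem~\ref{theo:3} to convert the assumed vanishing of $f$ on $S_1 \times \cdots \times S_n$ into an algebraic decomposition whose top-degree structure conflicts with the hypothesis on the coefficient of $\prod_i x_i^{t_i}$. This is the classical Combinatorial Nullstellensatz, so the argument is short once Theorem~\ref{theo:3} is in hand.

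First, I would assume toward a contradiction that $f(s_1, \dots, s_n) = 0$ for every choice of $s_i \in S_i$. Setting $g_i(x_i) = \prod_{s \in S_i}(x_i - s)$, which has degree exactly $|S_i|$ and leading term $x_i^{|S_i|}$, the polynomial $f$ vanishes on the common zero set of $g_1, \dots, g_n$. Theorem~\ref{theo:3} then produces polynomials $h_1, \dots, h_n \in F[x_1, \dots, x_n]$ with $\deg(h_i) \leq \deg(f) - \deg(g_i) = \deg(f) - |S_i|$ such that
\[
f = \sum_{i=1}^n h_i g_i.
\]

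Next, I would examine the monomial $M := \prod_{i=1}^n x_i^{t_i}$, whose total degree is $\sum_i t_i = \deg(f)$. For each term $h_i g_i$, note that $\deg(h_i g_i) \leq (\deg(f) - |S_i|) + |S_i| = \deg(f)$, so $M$ can appear in $h_i g_i$ only as a top-degree monomial. Since the top-degree part of $g_i$ is the single monomial $x_i^{|S_i|}$, the top-degree part of $h_i g_i$ equals the top-degree part of $h_i$ multiplied by $x_i^{|S_i|}$. Hence $M$ appears in $h_i g_i$ only if $h_i$ contains the monomial $x_i^{t_i - |S_i|} \prod_{j \neq i} x_j^{t_j}$. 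But the hypothesis $|S_i| > t_i$ forces $t_i - |S_i| < 0$, so such a monomial cannot exist in the polynomial $h_i$.

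Therefore the coefficient of $M$ in $\sum_i h_i g_i$ is $0$, while by assumption the coefficient of $M$ in $f$ is nonzero. This contradicts $f = \sum_i h_i g_i$ and completes the proof. I do not expect any real obstacle here: the only delicate point is verifying that $M$ can arise in $h_i g_i$ solely via the leading term $x_i^{|S_i|}$ of $g_i$, which follows from a clean degree count using the bound on $\deg(h_i)$ supplied by Theorem~\ref{theo:3}.
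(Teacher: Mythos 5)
Your proof is correct; it is the standard derivation of this second form of the Combinatorial Nullstellensatz from the first form (Theorem~\ref{theo:3}), exactly as in Alon's original paper. The paper itself states Theorem~\ref{theo:3_1} without proof, citing it as a known result, so there is no in-paper argument to compare against; your degree-counting step (that the coefficient of $\prod_i x_i^{t_i}$ in $h_i g_i$ can only arise from a degree-$(\deg f - |S_i|)$ monomial of $h_i$ times the leading monomial $x_i^{|S_i|}$ of $g_i$, which is impossible when $|S_i| > t_i$) is sound and fills the one delicate point cleanly.
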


\begin{theorem}\label{theo:5}
    Let \( v = (v_1, v_2, \dots, v_k) \). The following statements are equivalent:
    \begin{enumerate}
        \item There is a solution for \( v \). 
        \item There exist \( \alpha_1, \dots, \alpha_k \in \mathbb{F}_q \) such that \( f_v(\alpha_1, \dots, \alpha_k) \neq 0 \).
        \item It holds that \( f_v \not\equiv_R 0 \).
    \end{enumerate}
\end{theorem}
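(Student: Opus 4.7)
The plan is to prove the three-way equivalence by establishing (1) $\Leftrightarrow$ (2) directly from the factored form of $f_v$, and (2) $\Leftrightarrow$ (3) via a straightforward application of Theorem~\ref{theo:3}. Since (3) is just a restatement of ``$f_v$ is not in $I$'', and $I$ is generated by the polynomials $x_i^q + x_i$ that cut out $\mathbb{F}_q^k$, the heart of the matter is reading off the combinatorial meaning of each factor of $f_v$.

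For (1) $\Leftrightarrow$ (2), I would reparameterize a solution as follows: a partition of $\mathbb{F}_q$ into pairs $\{x_i, y_i\}$ with $x_i + y_i = v_i$ is equivalent, by distinguishing one representative $\alpha_i := x_i$ of each pair, to a tuple $(\alpha_1, \dots, \alpha_k) \in \mathbb{F}_q^k$ such that the $2k = q$ values $\alpha_1, \dots, \alpha_k, \alpha_1+v_1, \dots, \alpha_k+v_k$ are pairwise distinct and hence exhaust $\mathbb{F}_q$. Since each $v_i \neq 0$, the within-pair distinctness $\alpha_i \neq \alpha_i + v_i$ is automatic. For each $i < j$ there are exactly four possible between-pair coincidences, namely $\alpha_i = \alpha_j$, $\alpha_i + v_i = \alpha_j$, $\alpha_i = \alpha_j + v_j$, and $\alpha_i + v_i = \alpha_j + v_j$. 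In characteristic $2$, these translate respectively into the vanishing of the four factors $(x_i + x_j)$, $(x_i + x_j + v_i)$, $(x_i + x_j + v_j)$, $(x_i + x_j + v_i + v_j)$ at $(\alpha_1, \dots, \alpha_k)$. Combining this with the non-vanishing of the prefactor $v_1 v_2 \cdots v_k$ (clear because each $v_i \neq 0$), the tuple $(\alpha_i)$ comes from a solution if and only if $f_v(\alpha_1, \dots, \alpha_k) \neq 0$.

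For (2) $\Leftrightarrow$ (3), I would apply Theorem~\ref{theo:3} with $S_i = \mathbb{F}_q$ for every $i$, so that $g_i(x_i) = \prod_{s \in \mathbb{F}_q}(x_i - s) = x_i^q + x_i$, which are precisely the generators of $I$. If $f_v$ vanishes at every point of $\mathbb{F}_q^k$, the theorem produces polynomials $h_1, \dots, h_k$ with $f_v = \sum_{i=1}^k h_i (x_i^q + x_i)$, so $f_v \in I$ and hence $f_v \equiv_R 0$. Conversely, since each generator of $I$ vanishes identically on $\mathbb{F}_q^k$, any element of $I$ does as well, so $f_v \equiv_R 0$ forces every evaluation $f_v(\alpha_1, \dots, \alpha_k)$ to be zero. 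Thus (2) and (3) are negations of the same statement, completing the chain.

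I do not foresee any deep obstacle here: each step is either a direct factor-by-factor translation or an invocation of the Nullstellensatz-type Theorem~\ref{theo:3}. The one point that requires care is the characteristic-$2$ bookkeeping in the (1) $\Leftrightarrow$ (2) step, where one must verify that each of the four coincidence patterns among the values $\alpha_i$ and $\alpha_i + v_i$ is correctly matched with a distinct linear factor of $f_v$, with no factor being redundant or missing.
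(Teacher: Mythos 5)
Your proof is correct and follows essentially the same approach as the paper's: the paper also derives (1) $\Rightarrow$ (2) by rewriting each factor in terms of $\alpha_i, \beta_i$, and derives the (2) $\leftrightarrow$ (3) equivalence by observing that $f_v \equiv_R 0$ is the same as $f_v$ lying in the ideal $I = \langle x_i^q + x_i \rangle$, invoking Theorem~\ref{theo:3} with $S_i = \mathbb{F}_q$ for one direction and the trivial vanishing of the generators for the other. The only organizational difference is that the paper proves the equivalence as a cycle $1 \Rightarrow 2 \Rightarrow 3 \Rightarrow 1$ (so the implication (2) $\Rightarrow$ (1) is obtained indirectly via (3)), whereas you give direct biconditionals (1) $\Leftrightarrow$ (2) and (2) $\Leftrightarrow$ (3) by noting that, pointwise, $(\alpha_1,\dots,\alpha_k)$ yields a solution if and only if the $q$ values $\alpha_1,\dots,\alpha_k,\alpha_1+v_1,\dots,\alpha_k+v_k$ are pairwise distinct, which happens if and only if every linear factor of $f_v$ is nonzero at $(\alpha_1,\dots,\alpha_k)$. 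This direct pointwise dictionary makes the argument marginally cleaner than the paper's cyclic version, but it is the same underlying idea.
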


\begin{proof}
    \textbf{1 $\Rightarrow$ 2:} Suppose there is a solution for the requests \( v_1, \dots, v_k \). 
    This means that there are pairwise distinct elements \( \alpha_1, \dots, \alpha_k, \beta_1, \dots, \beta_k \in \mathbb{F}_q \) such that for all \( 1 \leq i \leq k \),
    \[
         \alpha_i + \beta_i = v_i.
    \]
    Note that since \( k = 2^{s-1} \), \( \{ \alpha_1, \dots, \alpha_k, \beta_1, \dots, \beta_k \} = \mathbb{F}_q \).
    We now examine the polynomial function \( f_v \) evaluated at \( \alpha_1, \dots, \alpha_k \). The function is given by:
    \[
    f_v(\alpha_1, \dots, \alpha_k) = \prod_{1 \leq i < j \leq k} (\alpha_i + \alpha_j)(\alpha_i + \alpha_j + v_i)(\alpha_i + \alpha_j + v_j)(\alpha_i + \alpha_j + v_i + v_j)
    \]
    \[
    = \prod_{1 \leq i < j \leq k} (\alpha_i + \alpha_j)(\beta_i + \alpha_j)(\alpha_i + \beta_j)(\beta_i + \beta_j)
    \]
    Since \( \alpha_1, \dots, \alpha_k, \beta_1, \dots, \beta_k \) are distinct, the product of terms in the above expression is non-zero. Thus, we have:
    \[
    f_v(\alpha_1, \dots, \alpha_k) \neq 0.
    \]

    \textbf{2 $\Rightarrow$ 3:} Suppose there exist \( \alpha_1, \dots, \alpha_k \in \mathbb{F}_q \) such that \( f_v(\alpha_1, \dots, \alpha_k) \neq 0 \), but assume for contradiction that \( f_v \equiv_R 0 \). By the definition of congruence in \( R \), this means that \( f_v \in I \), where \( I = \langle x_1^q + x_1, x_2^q + x_2, \dots, x_k^q + x_k \rangle \). Since \( f_v \in I \), we can express \( f_v \) as a linear combination of the generators of \( I \):
    \[
    f_v = \sum_{i=1}^k h_i (x_i^q + x_i),
    \]
    where \( h_1, h_2, \dots, h_k \in \mathbb{F}_q[x_1, \dots, x_k] \). Evaluating this at \( \alpha_1, \dots, \alpha_k \):
    \[
    f_v(\alpha_1, \dots, \alpha_k) = \sum_{i=1}^k h_i (\alpha_i^q + \alpha_i) = 0.
    \]
    Since each term \( (\alpha_i^q + \alpha_i) = 0 \) for \( \alpha_i \in \mathbb{F}_q \), the entire sum vanishes. This contradicts the assumption that \( f_v(\alpha_1, \dots, \alpha_k) \neq 0 \), hence \( f_v \not\equiv_R 0 \).

    \textbf{3 $\Rightarrow$ 1:} Suppose that \( f_v \not\equiv_R 0 \), but there is no solution for \( v_1, v_2, \dots, v_k \). If there is no solution, it means that for any \( \alpha_1, \dots, \alpha_k \) and \( \beta_1, \dots, \beta_k \) such that \( \alpha_i + \beta_i = v_i \), the set \( \{ \alpha_1, \dots, \alpha_k, \beta_1, \dots, \beta_k \} \) is not equal to \( \mathbb{F}_q \). This implies that there is repetition in the set, leading to the fact that \( f_v(\alpha_1, \dots, \alpha_k) = 0 \),
    for any choice of \( \alpha_1, \alpha_2, \dots, \alpha_k \).
    By Theorem~\ref{theo:3}, in this case, we have \( f_v \) represented as
    \[
    f_v = \sum_{i=1}^k h_i (x_i^q + x_i),
    \]
    i.e., \( f_v \equiv_R 0 \). 
    This contradicts the assumption that \( f_v \not\equiv_R 0 \), and hence a solution must exist for \( v \).

    Thus, we have shown that the statements are equivalent.
\end{proof}

Our next goal is to demonstrate that the necessary condition for a solution to exist for \( v \) is that \( \sum_{i=1}^k v_i = 0 \) (for non-zero $v_i$'s).
\begin{lemma}\label{lemma:6}
     If \( \sum_{i=1}^k v_i \neq 0 \), then \( f_v \equiv_R 0 \).
\end{lemma}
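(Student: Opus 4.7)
The plan is to combine Theorem~\ref{theo:5} with a direct parity/sum argument on the pairing. By Theorem~\ref{theo:5}, it suffices to show that no solution exists for $v$ whenever $\sum_{i=1}^k v_i \neq 0$, since ``no solution'' is equivalent to $f_v \equiv_R 0$.

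Suppose for contradiction that a solution for $v$ exists. By the formulation in the preliminaries, a solution is a partition of $\mathbb{F}_q$ into pairs $\{\alpha_i,\beta_i\}$ with $\alpha_i + \beta_i = v_i$ for each $1 \leq i \leq k$; this uses exactly the $2k = q$ elements of $\mathbb{F}_q$. Summing these $k$ equalities over $i$ yields
\[
\sum_{i=1}^k v_i \;=\; \sum_{i=1}^k (\alpha_i + \beta_i) \;=\; \sum_{x \in \mathbb{F}_q} x.
\]
The next step is the standard observation that, since $s \geq 2$ (so $q \geq 4$), the sum of all elements of $\mathbb{F}_q$ is $0$: picking any $a \in \mathbb{F}_q \setminus \{0,1\}$, the map $x \mapsto ax$ is a bijection of $\mathbb{F}_q$, so $\sum_{x} x = \sum_{x} ax = a \sum_{x} x$, forcing $(a-1)\sum_x x = 0$ and hence $\sum_x x = 0$. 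Combined with the displayed equation, this gives $\sum_{i=1}^k v_i = 0$, contradicting the hypothesis.

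Therefore no solution exists, and invoking the equivalence (1)$\Leftrightarrow$(3) of Theorem~\ref{theo:5} yields $f_v \equiv_R 0$. There is no real obstacle here; the only delicate point is being explicit that the setting $s \geq 2$ (as fixed in Conjecture~\ref{conj_2}) is what guarantees $\sum_{x \in \mathbb{F}_q} x = 0$, since this fails in the degenerate case $q=2$.
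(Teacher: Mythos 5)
Your proposal is correct and follows essentially the same route as the paper: both reduce, via Theorem~\ref{theo:5}, to showing that a solution cannot exist when $\sum_{i=1}^k v_i \neq 0$, and both obtain this by noting that a solution would pair up all of $\mathbb{F}_q$, forcing $\sum_{i=1}^k v_i = \sum_{x\in\mathbb{F}_q} x = 0$. The only difference is cosmetic: you justify explicitly (via the multiplication-by-$a$ bijection, valid since $s\geq 2$) that $\sum_{x\in\mathbb{F}_q} x = 0$, a fact the paper's proof uses implicitly.
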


\ifshowproofs
\begin{proof}
    Assume to the contrary that \( f_v \not\equiv_R 0 \).
    By Theorem~\ref{theo:5}, since \( f_v \not\equiv_R 0 \), there exists a solution for \( v \). 
    This means there are pairwise distinct elements \( \alpha_1, \dots, \alpha_k, \beta_1, \dots, \beta_k \in \mathbb{F}_q \) such that for all \( 1 \leq i \leq k \),
    \[
    \alpha_i + \beta_i = v_i.
    \]
    Since \( k = 2^{s-1} \), and since \( \alpha_i \) and \( \beta_i \) are distinct, it follows that
    \[
    \{ \alpha_1, \dots, \alpha_k, \beta_1, \dots, \beta_k \} = \mathbb{F}_q,
    \]
    which implies
    \[
    \sum_{i=1}^k (\alpha_i + \beta_i) = \sum_{i=1}^k v_i.
    \]
    However, we are given that \( \sum_{i=1}^k v_i \neq 0 \), so this leads to a contradiction since 
    \[
    0 = \sum_{i=1}^k (\alpha_i + \beta_i) = \sum_{i=1}^k v_i.
    \]
    Thus, we conclude that \( f_v \equiv_R 0 \).
\end{proof}
\else
\textcolor{red}{Proof omitted.}
\fi

To prove Conjecture~\ref{conj_2}, we are left to show that the condition \( \sum_{i=1}^k v_i = 0 \) is not only necessary but also sufficient. 
By Theorem~\ref{theo:5}, it is sufficient to prove the following theorem.
\begin{theorem}~\label{theo:main2}
      If $\sum^k_{i=1}v_i = 0$ then $f_v \not\equiv_R 0$.
\end{theorem}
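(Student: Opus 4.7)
By Theorem~\ref{theo:5}, proving $f_v \not\equiv_R 0$ is equivalent to exhibiting $(\alpha_1,\dots,\alpha_k)\in \mathbb{F}_q^k$ such that $f_v(\alpha_1,\dots,\alpha_k)\neq 0$. The plan is to apply the Combinatorial Nullstellensatz in the form of Theorem~\ref{theo:3_1}: it suffices to locate a monomial $\prod_{i=1}^k x_i^{t_i}$ in the expansion of $f_v$ whose degree-sum equals $\deg(f_v)$, whose exponents satisfy $t_i<q$, and whose coefficient in $\mathbb{F}_q$ is nonzero. Everything else in the argument then reduces to a coefficient extraction problem in the commutative polynomial ring $\mathbb{F}_q[x_1,\ldots,x_k]$.

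Counting degrees, $\deg(f_v)=4\binom{k}{2}=k(q-2)$, so the natural first candidate is the symmetric monomial $\prod_i x_i^{q-2}$. Its coefficient in the top-degree piece $v_1\cdots v_k\prod_{i<j}(x_i+x_j)^4$ equals $v_1\cdots v_k\cdot \binom{q}{2,\ldots,2}$ by Dyson's multinomial identity. In characteristic $2$ this multinomial coefficient vanishes, so this leading monomial contributes nothing and we must search further down.

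The next step is to expand
\begin{align*}
f_v(x_1,\dots,x_k) \;=\; \sum_{\mathbf{a}} C_{\mathbf{a}}(v_1,\dots,v_k)\prod_{i=1}^k x_i^{a_i},
\end{align*}
reduce modulo $I$ so every exponent satisfies $a_i<q$, and hunt for a tuple $\mathbf{a}$ for which $C_{\mathbf{a}}(v)\in\mathbb{F}_q[v_1,\ldots,v_k]$ is not identically zero on the hyperplane $\{v:\sum_i v_i=0\}$. The constraint $\sum v_i=0$ fits naturally into the extended Nullstellensatz (Theorem~\ref{theo:4Lev}): since $\sum_{i=1}^k(\alpha_i+\beta_i)=\sum_{a\in\mathbb{F}_q}a=0$ forces the $\alpha_i$'s onto an affine hyperplane, working on the subspace $x_1+\cdots+x_k=c$ yields an additional algebraic relation that can keep otherwise-cancelling coefficients alive. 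Each sufficient condition for Conjecture~\ref{conj_2} should then arise by singling out a monomial whose coefficient is provably nonzero for a prescribed class of $v$'s, either by symmetry, direct inspection, or comparison with a known polynomial identity; the $s=2,3$ cases promised later in the paper are natural testbeds.

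The principal obstacle is precisely this vanishing of the Dyson coefficient in characteristic~$2$. Moving away from $\prod x_i^{q-2}$ forces one to track how the non-leading factors $(x_i+x_j+v_i)$, $(x_i+x_j+v_j)$ and $(x_i+x_j+v_i+v_j)$ recombine after reduction modulo $I$; the resulting coefficients $C_{\mathbf{a}}(v)$ are intricate symmetric polynomials in $v_1,\dots,v_k$ whose common vanishing locus on the hyperplane $\sum v_i=0$ is hard to describe. A complete resolution of Conjecture~\ref{conj_2} would require either exhibiting a single $C_{\mathbf{a}}$ that is structurally nonzero on this hyperplane for every admissible multiset $(v_1,\dots,v_k)$ of nonzero vectors, or ruling out simultaneous vanishing of a family of candidate coefficients. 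Pending such a ``universal'' monomial, the realistic deliverable is a family of sufficient conditions on $v$ that carve out large classes of cases, and this is the route the remainder of the paper should follow.
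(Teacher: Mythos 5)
Theorem~\ref{theo:main2} is not actually proved in the paper: it is a restatement of Conjecture~\ref{conj_2} via Theorem~\ref{theo:5}, and the conclusion (Section~\ref{sec:conc}) explicitly defers the general case $s\geq 4$ to future work. You have correctly diagnosed this situation — your submission is a strategy rather than a completed proof, and your honest closing assessment (``the realistic deliverable is a family of sufficient conditions\dots and this is the route the remainder of the paper should follow'') is exactly what the paper does, via Corollary~\ref{cor:10} and the $s=2,3$ computations of Section~\ref{sec:2_3}.

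Moreover, your plan matches the paper's approach step for step: reduce to nonvanishing of $f_v$ via Theorem~\ref{theo:5}, invoke the Combinatorial Nullstellensatz (Theorem~\ref{theo:3_1}) to hunt for a top-degree monomial with nonzero coefficient, note that the obvious candidate $\prod_i x_i^{q-2}$ is killed by the parity of the Dyson coefficient $\binom{q}{2,\dots,2}$, and then exploit the hyperplane constraint $\sum_i v_i=0$ through Theorem~\ref{theo:4Lev} to control the remaining coefficients. What the paper supplies that your sketch leaves open is precisely the structural analysis of the coefficient polynomials $g_j$: Claim~\ref{claim:9} exhibits a concrete nonvanishing monomial; Claim~\ref{claim:10} and Lemma~\ref{lemma:10} show every nonzero $g_j$ depends on all the $v_i$'s with $\deg_{v_i}(g_j)=q-1$; Lemma~\ref{lemma:12} factors $g_j$ through $\bigl((v_1+\cdots+v_k)^{q-1}+1\bigr)(v_2+\cdots+v_k)h_1$ plus terms in the ideal; and Corollary~\ref{cor:10} converts properties of $h_1$ into checkable sufficient conditions. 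If you want to turn your roadmap into content, these are the lemmas you would need to develop — but as a plan your proposal is sound and aligned with the paper's actual route, while also being appropriately candid that the theorem itself remains open.
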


\ 

Denote by $w_H(v_i)$ the Hamming weight of the binary representation of $v_i$. 
In~\cite{YohananovYaakobi2022}, it was proven that if all $v_i$'s are equal, or if all $v_i$'s satisfy $w_H(v_i)$ is odd (under the assumption that the $v_i$'s are non-zero and $\sum_{i=1}^k v_i = 0$), then there is an efficient algorithm that finds a solution.
In~\cite{Kovacs2023}, it was proven that there is a solution if the number of distinct values among the difference vectors is at most $s - 2 \log s - 1$.
In other words, we can conclude the following corollary:

\begin{corollary}\label{cor:8}
    If all $v_i$'s are equal, or if all $v_i$'s satisfy $w_H(v_i)$ is odd, or the number of distinct values among the difference vectors is at most $s - 2 \log s - 1$, then $f \not\equiv_R 0$.
\end{corollary}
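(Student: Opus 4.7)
The plan is to observe that Corollary~\ref{cor:8} is essentially a translation, via Theorem~\ref{theo:5}, of the existence results already established in~\cite{YohananovYaakobi2022} and~\cite{Kovacs2023} into the polynomial-ring language of this paper. So the proof should be short: I would verify that each of the three hypotheses implies the existence of a solution for $v = (v_1, \dots, v_k)$ in the sense of Conjecture~\ref{conj_2}, and then invoke the implication $1 \Rightarrow 3$ of Theorem~\ref{theo:5} to conclude $f_v \not\equiv_R 0$.

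In more detail, first I would note that in each of the three cases, the hypothesis implicitly (or explicitly) includes $\sum_{i=1}^k v_i = 0$ and $v_i \neq 0$, which is the standing assumption throughout this section. Under the first hypothesis (all $v_i$'s equal), the common value must be $0$ whenever $k$ is even, but since we require $v_i \neq 0$ and $k = 2^{s-1}$ is even for $s \geq 2$, this case actually forces a pairing by the explicit construction in~\cite{YohananovYaakobi2022}; under the second hypothesis (every $w_H(v_i)$ odd), the algorithm of~\cite{YohananovYaakobi2022} produces a partition of $\mathbb{F}_q$ into pairs $\{\alpha_i, \beta_i\}$ with $\alpha_i + \beta_i = v_i$; and under the third hypothesis (at most $s - 2\log s - 1$ distinct difference vectors), the existence of such a partition is guaranteed by~\cite{Kovacs2023}. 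In each case, such a partition is precisely a solution for $v$ in the sense of Section~\ref{sec:pre}.

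Once a solution exists, the implication $1 \Rightarrow 2$ of Theorem~\ref{theo:5} yields $\alpha_1, \dots, \alpha_k \in \mathbb{F}_q$ (namely, one element from each pair) with $f_v(\alpha_1, \dots, \alpha_k) \neq 0$, and the implication $2 \Rightarrow 3$ (proved by contradiction using the fact that every element of $I$ vanishes at every point of $\mathbb{F}_q^k$) gives $f_v \not\equiv_R 0$. Combining these yields the corollary.

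There is no genuine obstacle here: the real work is done by the external theorems of~\cite{YohananovYaakobi2022} and~\cite{Kovacs2023} and by Theorem~\ref{theo:5} itself. The only point requiring minor care is to confirm that the standing assumption $\sum_{i=1}^k v_i = 0$ is compatible with each of the three listed hypotheses (it is trivially so in the first two, and is part of the setup of~\cite{Kovacs2023} for the third), and that a ``solution'' in the sense of the previous works coincides with the notion used in Theorem~\ref{theo:5}, which is immediate from the definitions in Section~\ref{sec:pre}.
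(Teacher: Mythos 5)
Your proposal is correct and follows exactly the paper's intent: the corollary is stated immediately after citing the existence results of~\cite{YohananovYaakobi2022} and~\cite{Kovacs2023}, and the paper treats it as an immediate consequence of those results together with the equivalence $1 \Leftrightarrow 3$ in Theorem~\ref{theo:5}, just as you argue. (Your brief worry about the all-equal case forcing the common value to zero is unfounded—since $k=2^{s-1}$ is even and the field has characteristic $2$, the sum $\sum v_i$ vanishes automatically for any common nonzero value—but you recover from it and the conclusion stands.)
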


\section{The Analysis of \( f_v \)}
\label{sec:gen_case}

Let \( k = 2^{s-1} \) and \( q = 2^s \).
Let \( v = (v_1, v_2, \dots, v_k) \) be non-zero binary (not necessarily distinct) vectors of length \( s \).
The polynomial can be formulated as follows:
\[
f_v(x_1, x_2, \dots, x_k) = v_1 \cdots v_k \prod_{1 \leq i < j \leq k} (x_i + x_j)(x_i + x_j + v_i)(x_i + x_j + v_j)(x_i + x_j + v_i + v_j)
\]
\[
\equiv_R \sum_{i_1, i_2, \dots, i_k} g_j(v_1, v_2, \dots, v_k) x_1^{i_1} x_2^{i_2} \cdots x_k^{i_k} \eqdef \bar{f_v}(x_1, x_2, \dots, x_k),
\]
where \( g_j \in \mathbb{F}_q \) is the \( j \)-th coefficient of the polynomial \( \bar{f_v} \), which is congruent to \( f_v \) in \( R \).
Since \( \bar{f_v} \in R \), the exponents \( i_j \)'s are bounded by \( q-1 \).
Note that \( g_j \) depends on \( v_1, v_2, \dots, v_k \), and the monomials are arranged lexicographically with respect to the indices \( i_1, i_2, \dots, i_k \).
By abuse of notation, we sometimes refer to \( g_j(v_1, v_2, \dots, v_k) \) as a polynomial in \( \mathbb{F}_2[v_1, v_2, \dots, v_k] \).

By Dyson's conjecture~\cite{Dyson1962}, the coefficient of the monomial \( \prod_{i=1}^k x_i^{q-2} \) in the  $\prod_{1 \leq i < j \leq k} (x_i + x_j)^4$ part is \( \binom{q}{2, 2, \dots, 2} \), which is even.
Consequently, for this monomial, \( g_j = 0 \), making the problem more challenging.
However, by Corollary~\ref{cor:8}, not all \( g_j \)'s are zero.
For example, we will prove that the \( g_j \) of \( x_1^{q-2} \prod_{i=1}^{k-1} x_{i+1}^{2q-4i} \) is non-zero.
Furthermore, every non-zero \( g_j \) depends on all \( v_i \)'s and satisfies \( \deg_{v_i}(g_j) = q-1 \), as will be shown later.

\begin{claim}\label{claim:9}
    The \( g_j \) of the monomial \( x_1^{q-2} \prod_{i=1}^{k-1} x_{i+1}^{2q-4i} \) is non-zero.
\end{claim}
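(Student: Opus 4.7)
The plan is to compute the coefficient $g_j$ via an explicit expansion of $f_v$ in characteristic $2$, and then to exhibit a witness monomial in the $v_i$'s whose contribution to $g_j$ cannot cancel.

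First, I would apply the characteristic-$2$ identity
\[
(x_i+x_j)(x_i+x_j+v_i)(x_i+x_j+v_j)(x_i+x_j+v_i+v_j)=A_{ij}^4+(v_i^2+v_iv_j+v_j^2)A_{ij}^2+v_iv_j(v_i+v_j)A_{ij},
\]
where $A_{ij}\eqdef x_i+x_j$, to rewrite $f_v$ as a product of $\binom{k}{2}$ trinomials. Using the Frobenius identity $A_{ij}^t=x_i^t+x_j^t$ for $t\in\{1,2,4\}$ (since these are powers of $2$), the expansion becomes a sum indexed by pairs $(c,\omega)$, where $c\colon E(K_k)\to\{1,2,4\}$ labels each edge with a chosen exponent and $\omega$ orients each edge toward one of its endpoints. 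The pair $(c,\omega)$ contributes the $x$-monomial $\prod_m x_m^{b_m}$, with $b_m=\sum_{e\ni m,\,\omega(e)=m}c(e)$, weighted by $v_1\cdots v_k\prod_{e=\{i,j\}}\mathrm{coef}_{c(e)}(v_i,v_j)$, where $\mathrm{coef}_4=1$, $\mathrm{coef}_2=v_i^2+v_iv_j+v_j^2$, and $\mathrm{coef}_1=v_iv_j(v_i+v_j)$.

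Second, I would determine which pairs $(c,\omega)$ produce a monomial reducing modulo $I$ to the target. The constraints $b_m\equiv r_m\pmod{q-1}$ (where $r_m$ is the reduced target exponent for $x_m$), $1\le b_m\le 4(k-1)$, and $\sum_m b_m\le 2k(k-1)$, combined with the profile identities $\sum_m b_m=4\#_4+2\#_2+\#_1$ and $\#_4+\#_2+\#_1=\binom{k}{2}$ (where $\#_t=|c^{-1}(t)|$), drastically prune the admissible pre-reduction tuples $(b_1,\ldots,b_k)$ and edge-label profiles $(\#_4,\#_2,\#_1)$, typically to a very short list.

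Third, to prove $g_j\not\equiv 0$ as a polynomial in $v_1,\ldots,v_k$, I would exhibit a witness monomial $\prod_m v_m^{d_m}$ whose contribution comes from a uniquely-determined configuration family, so that no cancellation is possible. A natural candidate is a lex-maximal $v$-monomial: its $v_m$-exponents force specific $\mathrm{coef}$-choices at each edge incident to $m$, and the orientation $\omega$ is then pinned by the $x$-exponent equations. Its coefficient in $g_j$ is then an explicit non-zero product of the $\mathrm{coef}$-polynomials.

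The main obstacle will be ruling out cancellations among contributing configurations; this is where the specific structure of the target exponents must be exploited. The Vandermonde-like profile on $x_2,\ldots,x_k$ together with the anomalous $x_1^{q-2}$ and small reduced exponents like $r_m=1$ for some $m$ force most edges incident to certain vertices to carry the high label $c=4$, which tightly pins down the feasible configurations and underwrites the uniqueness argument needed for the witness. Verifying this uniqueness in detail---either by a direct case analysis or a combinatorial invariant---is the key technical step.
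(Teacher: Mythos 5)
Your setup is fine as far as it goes: the characteristic-$2$ identity $(x_i+x_j)(x_i+x_j+v_i)(x_i+x_j+v_j)(x_i+x_j+v_i+v_j)=A_{ij}^4+(v_i^2+v_iv_j+v_j^2)A_{ij}^2+v_iv_j(v_i+v_j)A_{ij}$ is correct, and the edge-label/orientation bookkeeping is a legitimate reorganization of the expansion. But the proposal stops exactly where the claim has to be proved. You never exhibit the witness $v$-monomial, and you explicitly defer the no-cancellation verification as ``the key technical step.'' Since $g_j$ lies in $\mathbb{F}_2[v_1,\dots,v_k]$ and the entire difficulty is cancellation in characteristic $2$ (recall the Dyson-type monomial does vanish), asserting that a lex-maximal candidate ``should'' come from a uniquely determined configuration, without identifying it or proving the uniqueness, does not establish $g_j\neq 0$. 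A second, self-inflicted complication: you phrase the selection of configurations modulo $I$ (exponents counted mod $q-1$), which enlarges the set of configurations that can collide onto the target and turns your pruning step into a heuristic (``typically a very short list''); the paper's argument never needs this, because it works directly with the unreduced expansion and a monomial whose exponents saturate the available occurrences.

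The paper closes precisely the gap you leave open, with a short saturation count on the chosen extremal monomial. It tracks the coefficient monomial $v_1^{q-1}v_2\cdots v_k$: the variable $v_1$ occurs in exactly $2(k-1)+1=q-1$ places (the prefactor $v_1$ plus the two factors $(x_1+x_j+v_1)$ and $(x_1+x_j+v_1+v_j)$ for each $j\geq 2$), so reaching $v_1$-degree $q-1$ forces taking $v_1$ from every one of them; after that, $x_1$ occurs in exactly $q-2$ of the remaining factors, so the exponent $q-2$ forces taking $x_1$ from all of them; then, proceeding over $i=1,\dots,k-1$, the prescribed exponent of $x_{i+1}$ equals the total number of occurrences of $x_{i+1}$ in the factors not yet consumed, forcing all of those choices too. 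Every choice being forced, this $(v,x)$-monomial arises exactly once, so its coefficient is $1$ and $g_j\neq 0$. This is exactly the uniqueness argument your plan presupposes; your lex-maximal intuition points at the right witness (it is the $v_1$-saturated monomial), but until you carry out this occurrence count --- which in your edge-label language means showing each edge at vertex $1$ is forced to contribute its $v_1^2$ term and each remaining edge its top $x$-power --- the proposal is a plan rather than a proof.
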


\begin{proof}
    Note that the degree of \( f_v \) is:
    \[
    \deg(f_v) = 4 \cdot \binom{k}{2} = 2k(k-1) = q\left(\frac{q}{2} - 1\right) = \frac{q}{2}(q-2).
    \]

    Consider the specific monomial of \( f_v \):
    \[
    g_j(v_1, v_2, \dots, v_k)x_1^{q-2} \prod_{i=1}^{k-1} x_{i+1}^{2q-4i}.
    \]

    The \( v_1^{q-1}x_1^{q-2} \) part is uniquely obtained from:
    \[
    v_1\prod_{j=2}^k (x_1+x_j)(x_1+x_j+v_1)(x_1+x_j+v_2)(x_1+x_j+v_1+v_j),
    \]
    since there are \( 2(k-1) + 1 = q-1 \) occurrences of \( v_1 \), and after selecting these, there are \( 2(k-1) = q-2 \) occurrences of \( x_1 \).
    Similarly, for all $1\leq i \leq k-1$, the \( x_{i+1}^{2q-4i} \) part is uniquely obtained from the remaining terms:
    \[
    v_{i+1}\prod_{j=i+2}^k (x_{i+1}+x_j)(x_{i+1}+x_j+v_{i+1})(x_{i+1}+x_j+v_j)(x_{i+1}+x_j+v_{i+1}+v_j),
    \]
    since there are \( 4(k-i) = 2q-4i \) occurrences of \( x_{i+1} \).
\end{proof}

Denote \( B = \{j \mid g_j \neq 0\} \).
Let \( A = \{(\beta_1, \beta_2, \dots, \beta_k) \mid \beta_i \in \mathbb{F}_q, \sum_{i=1}^k \beta_i \neq 0\} \).
Define \( P_k \) as the set of all polynomials \( g(v_1, v_2, \dots, v_k) \in \mathbb{F}_2[v_1, v_2, \dots, v_k] \) such that \( 1 \leq \deg_{v_i}(g) \leq q-1 \) for all \( i \), and \( g(\beta_1, \beta_2, \dots, \beta_k) = 0 \) for all \( (\beta_1, \beta_2, \dots, \beta_k) \in A \).

We start with the following claim.
\begin{claim}\label{claim:9_1}
    For all $g_j \neq 0$, there exists a point $\beta_1, \dots, \beta_k \in \F_q$ such that $g_j(\beta_1, \dots, \beta_k) \neq 0$.
    Moreover $\sum^k_{i=1}\beta_i = 0$.
\end{claim}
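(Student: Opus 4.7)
The plan is to combine Lemma~\ref{lemma:6} with the standard fact that, over $\F_q$, a polynomial whose per-variable degree is at most $q-1$ is determined by its values on $\F_q^k$.

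First I would record the bound $\deg_{v_i}(f_v)\leq q-1$ directly from the factored form of $f_v$: the prefactor $v_1v_2\cdots v_k$ contributes degree $1$ in $v_i$, while for each $j\neq i$ the quadruple $(x_i+x_j)(x_i+x_j+v_i)(x_i+x_j+v_j)(x_i+x_j+v_i+v_j)$ contributes degree at most $2$ in $v_i$, so altogether $\deg_{v_i}(f_v)\leq 1+2(k-1)=2k-1=q-1$. Since the reduction $f_v\equiv_R \bar{f_v}$ only rewrites powers of the $x_i$'s modulo $x_i^q+x_i$ and does not touch the $v_i$'s, this bound passes to every coefficient: $g_j\in\F_2[v_1,\dots,v_k]$ with $\deg_{v_i}(g_j)\leq q-1$ for each $i$.

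Second, I would invoke the standard identification that the restriction map from the quotient $\F_q[v_1,\dots,v_k]/(v_1^q+v_1,\dots,v_k^q+v_k)$ to the space of all functions $\F_q^k\to\F_q$ is an isomorphism of $\F_q$-vector spaces (both sides have dimension $q^k$, and Lagrange interpolation gives surjectivity). Because $g_j$ is already in reduced form, $g_j\neq 0$ as a polynomial implies $g_j\neq 0$ as a function; hence there exists $(\beta_1,\dots,\beta_k)\in\F_q^k$ with $g_j(\beta_1,\dots,\beta_k)\neq 0$.

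Finally, for any such $\beta$ I apply the contrapositive of Lemma~\ref{lemma:6}: if $\sum_{i=1}^k\beta_i\neq 0$, then $f_\beta\equiv_R 0$, which forces every coefficient $g_{j'}(\beta)$ of $\bar{f_\beta}$ to vanish, and in particular $g_j(\beta)=0$, contradicting the choice of $\beta$. Therefore $\sum_{i=1}^k\beta_i=0$, completing the claim. I do not expect a genuine obstacle here: the degree bound is an elementary counting argument on the factored form of $f_v$, and the sum-zero constraint is delivered for free by the already-proved Lemma~\ref{lemma:6}; the only point requiring minor care is ensuring the $g_j$'s live in the reduced subspace so that ``nonzero as a polynomial'' coincides with ``nonzero as a function'' on $\F_q^k$.
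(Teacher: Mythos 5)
Your proof is correct, and its skeleton is the same as the paper's: establish $\deg_{v_i}(g_j)\leq q-1$ from the factored form of $f_v$ (noting the reduction in $R$ only touches the $x_i$'s), deduce the existence of a nonvanishing point, and then use Lemma~\ref{lemma:6} to force $\sum_{i=1}^k\beta_i=0$ exactly as the paper does. The only real divergence is the tool for the nonvanishing step: the paper picks a monomial $\prod_i v_i^{t_i}$ of maximal total degree in $g_j$ and applies Theorem~\ref{theo:3_1} (the nonvanishing form of the Combinatorial Nullstellensatz) with sets $S_i=\F_q$, $|S_i|>t_i$, whereas you invoke the standard isomorphism between polynomials reduced modulo $(v_1^q+v_1,\dots,v_k^q+v_k)$ and functions $\F_q^k\to\F_q$, so that a nonzero reduced $g_j$ is automatically a nonzero function. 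The two facts are of equivalent strength here and both hinge on the same per-variable degree bound; your route is slightly more self-contained in that it avoids selecting a top-degree monomial, while the paper's route reuses a theorem it has already stated and will use again. Your closing step (if $\sum_i\beta_i\neq 0$ then $f_\beta\equiv_R 0$, hence all coefficients $g_{j'}(\beta)$ of the reduced form vanish) implicitly uses uniqueness of the reduced representative and the fact that reduction commutes with specializing $v\mapsto\beta$; this is the same identification the paper relies on throughout, so it is fine, though stating it explicitly would make the argument airtight.
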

\begin{proof}
    It can be verified that for all \( i \), \( \deg_{v_i}(g_j) \leq q-1 \). 
    Let $\prod_{i=1}^k v_i^{t_i}$ be a term of maximal degree in $g_j$. 
    Let $S_i \subseteq \F_q$ be arbitrary sets for \( 1 \leq i \leq k \) such that \( |S_i| > t_i \). 
    By Theorem~\ref{theo:3_1}, there exists a point \(\beta_1 \in S_1, \dots, \beta_k \in S_k\) such that \( g_j(\beta_1, \dots, \beta_k) \neq 0 \).
    By Lemma~\ref{lemma:6}, it is deduced that \( \sum_{i=1}^k \beta_i = 0 \).
\end{proof}

Next, we proceed to the following important claim.
\begin{claim}\label{claim:10}
    For all \( j \in B \), \( g_j \in P_k \).
\end{claim}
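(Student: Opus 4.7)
The plan is to verify each of the three conditions that define $P_k$ separately: the upper bound $\deg_{v_i}(g_j) \leq q-1$, the vanishing of $g_j$ on $A$, and the lower bound $\deg_{v_i}(g_j) \geq 1$.

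For the upper bound I would argue that reduction modulo $I$ only rewrites powers of the $x_\ell$'s and therefore cannot enlarge the $v_i$-degree of any coefficient; hence $\deg_{v_i}(g_j) \leq \deg_{v_i}(f_v)$. So it suffices to compute $\deg_{v_i}(f_v)$ directly from the product formula. The variable $v_i$ contributes once from the prefactor $v_1 v_2 \cdots v_k$, and exactly twice from every pair-block indexed by $\{i,j\}$ with $j \neq i$, namely inside $(x_i+x_j+v_i)$ and $(x_i+x_j+v_i+v_j)$. Pairs not involving $i$ contribute nothing. Summing gives $\deg_{v_i}(f_v) = 1 + 2(k-1) = 2k-1 = q-1$, as desired.

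The vanishing condition on $A$ is a direct consequence of Lemma~\ref{lemma:6}. Fix $\beta = (\beta_1, \dots, \beta_k) \in A$ and specialize $v_\ell \mapsto \beta_\ell$. Because the ideal $I = \langle x_1^q+x_1, \dots, x_k^q+x_k \rangle$ contains no $v$-variables, specialization of the $v_\ell$'s commutes with reduction modulo $I$, so the specialization of $\bar{f_v}$ at $v = \beta$ coincides with the normal form of $f_\beta$ in $R$. Since $\sum_\ell \beta_\ell \neq 0$, Lemma~\ref{lemma:6} gives $f_\beta \equiv_R 0$, so its normal form is the zero polynomial in the $x_\ell$'s. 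Reading off each coefficient yields $g_j(\beta) = 0$ for every $j$, and in particular for every $j \in B$.

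For the lower bound, I would argue by contradiction using the vanishing step together with Claim~\ref{claim:9_1}. Suppose that for some $j \in B$ the coefficient $g_j$ is independent of $v_i$. By Claim~\ref{claim:9_1} there exists $\gamma = (\gamma_1, \dots, \gamma_k)$ with $\sum_\ell \gamma_\ell = 0$ and $g_j(\gamma) \neq 0$. Pick any $\gamma_i' \in \F_q$ with $\gamma_i' \neq \gamma_i$; then the modified tuple $\gamma' = (\gamma_1, \dots, \gamma_i', \dots, \gamma_k)$ satisfies $\sum_\ell \gamma_\ell' \neq 0$, hence $\gamma' \in A$, while $v_i$-independence forces $g_j(\gamma') = g_j(\gamma) \neq 0$, contradicting the previous step. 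Thus $\deg_{v_i}(g_j) \geq 1$ for every $i$, and $g_j \in P_k$ as claimed.

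The only mildly subtle point is the commutativity of $v$-specialization with reduction mod $I$; this is immediate because $I$ is generated purely by polynomials in the $x_\ell$'s. All remaining content is routine degree counting.
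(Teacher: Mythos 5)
Your proof is correct and follows the same approach as the paper: the substantive step is the lower bound $\deg_{v_i}(g_j)\geq 1$, established by combining Claim~\ref{claim:9_1} with a one-coordinate perturbation that lands in $A$, exactly as the paper does. Your explicit verification of the upper bound $\deg_{v_i}(f_v)=1+2(k-1)=q-1$ and of the fact that $v$-specialization commutes with reduction modulo $I$ (because $I$ is generated purely in the $x$'s) are good clarifications of points the paper states without justification. One small point to tighten: when perturbing $\gamma_i$ to $\gamma_i'$, you should insist $\gamma_i'\neq 0$ (always possible since $q\geq 4$); otherwise the vanishing of $g_j$ at the perturbed tuple is not directly covered by Lemma~\ref{lemma:6}, which in this paper's convention is stated for nonzero $v_i$'s. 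Alternatively, observe that every $g_j$ is divisible by $v_1\cdots v_k$ (since $f_v$ is and $I$ involves no $v$'s), which handles tuples with a zero coordinate and in fact yields $\deg_{v_i}(g_j)\geq 1$ immediately. The paper addresses the same point by explicitly requiring its $\gamma$ to be nonzero.
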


\begin{proof}
    As mentioned before, for all $i,j$ it holds that $\deg_{v_i}(g_j)\leq q-1$.
    Assume, without loss of generality, that \( g_j \) is independent of \( v_k \).
    By Claim~\ref{claim:9_1}, there must exist \( \beta_1, \beta_2, \dots, \beta_k \in \mathbb{F}_q \) such that \( g_j(\beta_1, \beta_2, \dots, \beta_k) \neq 0 \) and \( \sum_{i=1}^k \beta_i = 0 \).
    However, since \( g_j \) is independent of \( v_k \), there exists a non-zero \( \gamma \neq \beta_k \) such that
    \[
    g_j(\beta_1, \beta_2, \dots, \beta_{k-1}, \beta_k) = g_j(\beta_1, \beta_2, \dots, \beta_{k-1}, \gamma) \neq 0.
    \]
    Therefore, for \( w = (\beta_1, \beta_2, \dots, \beta_{k-1}, \gamma) \), we obtain \( f_w \not\equiv_R 0 \), even though
    \[
    \sum_{i=1}^{k-1} \beta_i + \gamma = \beta_k + \gamma \neq 0.
    \]
    This contradicts Lemma~\ref{lemma:6}, which asserts that \( f_w \not\equiv_R 0 \) only if \( \sum_{i=1}^k \beta_i = 0 \).
    Hence, \( g_j \) must depend on all the variables \( v_i \), and therefore, \( g_j \in P_k \).
\end{proof}

In the next lemma, we want to show that every non-zero $g_j$ must satisfy \( \deg_{v_i}(g_j) = q-1 \).
\begin{lemma}\label{lemma:10}
    For all \( j \in B \) and for all \( 1 \leq i \leq k \), it holds that \( \deg_{v_i}(g_j) = q-1 \).
\end{lemma}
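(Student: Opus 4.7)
The plan is to argue by contradiction. Suppose that for some $j \in B$ there exists an index $i$ with $\deg_{v_i}(g_j) \leq q-2$; by symmetry, I may assume $i = k$ and write $t := \deg_{v_k}(g_j) \leq q-2$. My aim is to show that under this assumption $g_j$ must vanish on all of $\mathbb{F}_q^k$, which by uniqueness of reduced representatives will force $g_j = 0$, contradicting $j \in B$.

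First I would fix an arbitrary point $(\beta_1, \dots, \beta_{k-1}) \in \mathbb{F}_q^{k-1}$, set $\sigma := \sum_{i=1}^{k-1} \beta_i$, and examine the univariate polynomial $p(v_k) := g_j(\beta_1, \dots, \beta_{k-1}, v_k) \in \mathbb{F}_q[v_k]$, whose degree is at most $t \leq q-2$. By Claim~\ref{claim:10}, $g_j \in P_k$, so $g_j$ vanishes on $A$; in particular, $p(\gamma) = 0$ for every $\gamma \in \mathbb{F}_q$ with $\gamma \neq \sigma$, since in characteristic $2$ the point $(\beta_1, \dots, \beta_{k-1}, \gamma)$ lies in $A$ precisely when $\gamma \neq \sigma$. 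Thus $p$ has at least $q-1$ distinct roots in $\mathbb{F}_q$ but degree at most $q-2$, forcing $p \equiv 0$, and in particular $p(\sigma) = 0$. Letting $(\beta_1, \dots, \beta_{k-1})$ range over $\mathbb{F}_q^{k-1}$, I conclude that $g_j$ also vanishes on the hyperplane $\{\sum_{i=1}^k v_i = 0\}$, hence on all of $\mathbb{F}_q^k$.

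To close the argument, I would note that since $\deg_{v_i}(g_j) \leq q-1$ holds for every $i$, the polynomial $g_j$ is already in the canonical reduced form associated with the ideal $I$. Applying Theorem~\ref{theo:3} with $S_i = \mathbb{F}_q$ (so that $g_i(v_i) = v_i^q + v_i$), the vanishing of $g_j$ on $\mathbb{F}_q^k$ gives a representation $g_j = \sum_{i=1}^k h_i (v_i^q + v_i)$ with $\deg(h_i) \leq \deg(g_j) - q$; reducing $v_i^q \mapsto v_i$ on the right-hand side then collapses $g_j$ to the zero polynomial, the desired contradiction. I do not foresee a serious obstacle in this plan, as it is essentially a single degree-versus-roots count. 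The one point to handle carefully is the final step, where the already-reduced individual degree bound $\deg_{v_i}(g_j) \leq q-1$ is what allows vanishing as a function on $\mathbb{F}_q^k$ to be upgraded to vanishing as a polynomial.
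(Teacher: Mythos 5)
Your argument is correct, and it is a variant of the paper's proof rather than a copy of it: both hinge on the same univariate count, namely fixing all but one of the $v$-coordinates and observing that the restriction of $g_j$ has degree at most $q-2$ while vanishing at the $q-1$ points lying in $A$, but you run the contradiction in the opposite direction. The paper first invokes Claim~\ref{claim:9_1} to get a witness point with $\sum_i\beta_i=0$ where $g_j$ does not vanish, and then uses the root count to produce a second non-vanishing point whose coordinate sum is nonzero, contradicting Lemma~\ref{lemma:6} (equivalently, $g_j\in P_k$). You instead dispense with Claim~\ref{claim:9_1} (and hence with Theorem~\ref{theo:3_1}) altogether: the root count forces each univariate restriction to be identically zero, so $g_j$ vanishes on all of $\mathbb{F}_q^k$, and the individual degree bound $\deg_{v_i}(g_j)\le q-1$ then upgrades functional vanishing to $g_j=0$ as a polynomial, contradicting $j\in B$. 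The trade-off is that you need this last evaluation-injectivity step (your appeal to Theorem~\ref{theo:3} plus uniqueness of the reduced representative is a fine way to justify it), whereas the paper's route stays local to a single line and never needs it; conversely, your route needs only the vanishing-on-$A$ part of Claim~\ref{claim:10}. One small wording fix: the ideal relevant to your final reduction is $\langle v_1^q+v_1,\dots,v_k^q+v_k\rangle$ in the $v$-variables, not the paper's $I$, which is generated by $x_i^q+x_i$ in the $x$-variables; the mathematics is unaffected.
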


\begin{proof}
    From Claim~\ref{claim:10}, we know that there exists \( j \in B \) such that \( g_j \in P_k \).
    For the sake of contradiction, suppose that there exists \( i \) such that \( \deg_{v_i}(g_j) < q-1 \).
    Without loss of generality, assume that \( i = 1 \).
    By Claim~\ref{claim:9_1} there exist \( \beta_1, \beta_2, \dots, \beta_k \in \mathbb{F}_q \), with \( \sum_{i=1}^k \beta_i = 0 \), such that \( g_j(\beta_1, \beta_2, \dots, \beta_k) \neq 0 \).
    
    Consider the polynomial
    \[
    g(v_1) = g_j(v_1, \beta_2, \dots, \beta_k).
    \]
    We know that \( 1 \leq \deg_{v_1}(g_j) \leq q-2 \), meaning \( \deg(g) \leq q-2 \).
    Since \( g_j(\beta_1, \beta_2, \dots, \beta_k) \neq 0 \), we also have \( g(v_1) \neq 0 \), which implies \( 1 \leq \deg(g) \leq q-2 \).
    
    Since the degree of \( g \) in \( v_1 \) is strictly smaller than \( q-1 \), there must exist another point (in addition to \( \beta_1 \)) \( \gamma \in \mathbb{F}_q \) where \( g \) does not vanish.
    Given that \( \sum_{i=1}^k \beta_i = 0 \), it follows that \( \gamma + \sum_{i=2}^k \beta_i \neq 0 \).
    Thus, we obtain $ g(\gamma) = g_j(\gamma, \beta_2, \dots, \beta_k) \neq 0$.
    However, this contradicts Lemma~\ref{lemma:6}, which asserts that \( g_j \) may not vanish only if \( \gamma + \sum_{i=2}^k \beta_i = 0 \).

    Therefore, the degree of \( g_j \) in \( v_1 \) must be \( q-1 \), and similarly for all other variables. Hence, \( \deg_{v_i}(g_j) = q-1 \) for all \( i \).
\end{proof}

We proceed with the following important lemma.

\begin{lemma}\label{lemma:12}
    Every \( g_j(v_1, \dots, v_k) \) can be expressed as follows:
    \[
    g_j(v_1, \dots, v_k) = \Big((v_1 + \dots + v_k)^{q-1} + 1\Big)\Big(v_2 + \dots + v_k\Big)h_1 + \sum_{j=2}^k \Big(v_j^q + v_j\Big)h_j,
    \]
    where \( h_1, \dots, h_k \in \mathbb{F}_q[v_1, \dots, v_k] \) satisfy \( \deg(h_i) \leq \deg(g_j) - q \). Moreover, \( h_1 \) is independent of \( v_1 \).
\end{lemma}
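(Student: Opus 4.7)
The plan is a two-stage reduction: first, obtain a decomposition in which the ``quotient'' by $(v_1+\cdots+v_k)^{q-1}+1$ can be taken independent of $v_1$; then use the divisibility of $g_j$ by $v_1$ to extract an additional factor $(v_2+\cdots+v_k)$ from that quotient via a second Nullstellensatz application, and finally combine.

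\textbf{Stage 1.} Substitute $y_1 = v_1+v_2+\cdots+v_k$ and set $\tilde g_j(y_1,v_2,\dots,v_k) := g_j(y_1+v_2+\cdots+v_k,v_2,\dots,v_k)$. By Claim~\ref{claim:10}, $\tilde g_j$ vanishes on $\{y_1\in\F_q\setminus\{0\},\ v_2,\dots,v_k\in\F_q\}$, so Theorem~\ref{theo:3} applied with $S_1=\F_q\setminus\{0\}$ and $S_i=\F_q$ for $i\ge 2$ yields
\[
\tilde g_j \;=\; A\cdot(y_1^{q-1}+1) \;+\; \sum_{i=2}^k \phi_i\cdot(v_i^q+v_i),
\]
with $\deg(A)\le\deg(g_j)-(q-1)$, $\deg(\phi_i)\le\deg(g_j)-q$, and by the remark following Theorem~\ref{theo:3}, $\deg_{y_1}(A),\deg_{y_1}(\phi_i)\le q-1$. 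Since $\deg_{v_1}(g_j)\le q-1$ by Lemma~\ref{lemma:10}, $\deg_{y_1}(\tilde g_j)\le q-1$ as well. Writing $A=\sum_{j=0}^{q-1}A_j(v_2,\dots,v_k)\,y_1^j$, the coefficient of $y_1^{j+q-1}$ in $A(y_1^{q-1}+1)$ for $j\ge 1$ is $A_j$, and since neither $\phi_i$ nor $\tilde g_j$ contributes to $y_1$-degrees $\ge q$, matching the coefficients of $y_1^q,\dots,y_1^{2q-2}$ forces $A_1=\cdots=A_{q-1}=0$. Hence $A=c(v_2,\dots,v_k)$ is $y_1$-independent.

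\textbf{Stage 2.} Since $f_v$ carries the explicit factor $v_1v_2\cdots v_k$ and reduction modulo $\langle x_i^q+x_i\rangle$ leaves the $v$-coefficients untouched, every $g_j$ is divisible by $v_1$; in particular $g_j|_{v_1=0}=0$ identically, i.e.\ $\tilde g_j|_{y_1=v_2+\cdots+v_k}\equiv 0$. Substituting $y_1=s:=v_2+\cdots+v_k$ into the Stage~1 identity and evaluating at any $(v_2,\dots,v_k)\in\F_q^{k-1}$ with $s=0$ makes the $\phi_i(v_i^q+v_i)$ terms vanish and $s^{q-1}+1=1$, leaving $c(v_2,\dots,v_k)=0$. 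Thus $c$ vanishes on $\{v_2+\cdots+v_k=0\}\cap\F_q^{k-1}$. Applying Theorem~\ref{theo:4Lev} to $c$ in the variables $v_2,\dots,v_k$ with $S_2=\{0\}$ and $S_i=\F_q$ for $i\ge 3$ gives
\[
c \;=\; h_1\cdot(v_2+\cdots+v_k) \;+\; \sum_{i=3}^k r_i\cdot(v_i^q+v_i),
\]
with $h_1\in\F_q[v_2,\dots,v_k]$ (hence independent of $v_1$), $\deg(h_1)\le\deg(c)-1\le\deg(g_j)-q$, and $\deg(r_i)\le\deg(c)-q$. Substituting back $y_1=v_1+\cdots+v_k$ into the Stage~1 identity and collecting the $\phi_i$ and $r_i((v_1+\cdots+v_k)^{q-1}+1)$ contributions into new coefficients $h_i$ for $i\ge 2$ yields the claimed identity; the bound $\deg(h_i)\le\deg(g_j)-q$ follows since $\deg\bigl(r_i((v_1+\cdots+v_k)^{q-1}+1)\bigr)\le(\deg(g_j)-2q+1)+(q-1)=\deg(g_j)-q$.

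The main obstacle is the coefficient-matching step in Stage~1 that forces $A$ to be $y_1$-independent: it rests delicately on the three $y_1$-degree bounds $\deg_{y_1}(\tilde g_j)\le q-1$, $\deg_{y_1}(A)\le q-1$, and $\deg_{y_1}(\phi_i)\le q-1$ all holding simultaneously, and a weaker bound on any of them would spoil the cancellation of the $y_1^\ell$-coefficients for $\ell\ge q$. A secondary technical point is to track the degree contributions carefully through the two successive Nullstellensatz applications so as to confirm $\deg(h_i)\le\deg(g_j)-q$ for every~$i$.
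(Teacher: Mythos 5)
Your proposal is correct and follows essentially the same two-stage route as the paper's proof: a first Nullstellensatz decomposition whose quotient by $(v_1+\cdots+v_k)^{q-1}+1$ is forced to be $v_1$-independent via degree bookkeeping and the remark after Theorem~\ref{theo:3}, and a second application using $g_j|_{v_1=0}=0$ to extract the factor $v_2+\cdots+v_k$. Making the substitution $y_1=v_1+\cdots+v_k$ explicit in Stage~1 is only a cosmetic difference, since that change of variables is exactly how Theorem~\ref{theo:4Lev} is derived from Theorem~\ref{theo:3}.
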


\begin{proof}
    Let \( S_1 = \text{GF}(q) \setminus \{0\} \) and \( S_i = \text{GF}(q) \) for all \( 2 \leq i \leq k \). Define \( p_i(v_i) = v_i^q + v_i \) and
    \[
    p_1(v_1, \dots, v_k) = (v_1 + \dots + v_k)^{q-1} + 1.
    \]
    By Lemma~\ref{lemma:6}, if \( v_1 + \dots + v_k \neq 0 \), then \( f_v \equiv_R 0 \).
    Therefore, by Theorem~\ref{theo:4Lev}, every \( g_j \) can be expressed as:
    \[
    g_j(v_1, \dots, v_k) = \Big((v_1 + \dots + v_k)^{q-1} + 1\Big)h'_1 + \sum_{j=2}^k \Big(v_j^q + v_j\Big)h'_j,
    \]
    where \( h'_1, \dots, h'_k \in \mathbb{F}_q[v_1, \dots, v_k] \) satisfy \( \deg(h'_i) \leq \deg(g_j) - (q-1) \).

    If \( h'_1 \) depends on \( v_1 \), then
    \[
    \deg_{v_1}\Big(\Big((v_1 + \dots + v_k)^{q-1} + 1\Big)h'_1\Big) \geq q.
    \]
    One can readily verify that \( \deg_{v_i}(g_j) \leq q-1 \). Since \( \deg_{v_i}(g_j) \leq q-1 \), any monomial containing \( v_1^t \) for \( t \geq q \) must vanish through cancellation with terms in \( \sum_{j=2}^k \Big(v_j^q + v_j\Big)h'_j \). However, this implies that at least one \( h_j \) for \( 2 \leq j \leq k \) satisfies \( \deg_{v_1}(h_j) \geq q \), which is impossible since the algorithm of Theorem~\ref{theo:4Lev} ensures every \( v_1^t \) for \( t \geq q \) is reduced.

    Now consider the case \( v_1 = 0 \). Here, \( g_j = 0 \) and:
    \[
    \Big((v_2 + \dots + v_k)^{q-1} + 1\Big)h'_1 = 0.
    \]
    Thus, if \( v_2 + \dots + v_k = 0 \), then \( h'_1 = 0 \). By Theorem~\ref{theo:4Lev}, \( h'_1 \) can be expressed as:
    \[
    h'_1(v_2, \dots, v_k) = (v_2 + \dots + v_k)h''_2 + \sum_{j=3}^k (v_j^q - v_j)h''_j,
    \]
    where \( h''_2, \dots, h''_k \in \mathbb{F}_q[v_1, \dots, v_k] \) satisfy \( \deg(h'_i) \leq \deg(h'_1) - 1 \).

    Combining these results, we obtain:
    \[
    g_j(v_1, \dots, v_k) = \Big((v_1 + \dots + v_k)^{q-1} + 1\Big)\Big(v_2 + \dots + v_k\Big)h_1 + \sum_{j=2}^k \Big(v_j^q + v_j\Big)h_j,
    \]
    where \( \deg(h_i) \leq \deg(g_j) - q \) and \( h_1 \) is independent of \( v_1 \).
\end{proof}

Denote by \( h_j \) the polynomial \( h_1 \) of \( g_j \). The following corollary follows.

\begin{corollary}\label{cor:10}
    Conjecture~\ref{conj_2} holds if any of the following conditions are satisfied:
    \begin{enumerate}
        \item For all \( \beta_2, \dots, \beta_k \in \mathbb{F}_q \) such that \( \sum_{i=2}^k \beta_i \neq 0 \), there exists a polynomial \( h_j \) such that \( h_j(\beta_2, \beta_3, \dots, \beta_k) \neq 0 \).
        \item There exists a non-zero polynomial \( h_j(v_2, v_3, \dots, v_k) \) that is independent of one of the variables \( v_2, \dots, v_k \).
        \item There exists a non-zero polynomial \( h_j(v_2, v_3, \dots, v_k) \) such that for all \( 1 \leq i \leq k \), \( \deg_{v_i}(h_j) \leq 1 \).
    \end{enumerate}
\end{corollary}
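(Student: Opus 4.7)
The plan is to reduce all three sufficient conditions to a single non-vanishing statement about the polynomials $h_j$, and then to verify that reduction for each case.

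The first step is to specialize the decomposition of Lemma~\ref{lemma:12} to a valid tuple $v=(v_1,\dots,v_k)$ (non-zero entries summing to zero). On such a tuple, $v_i\in\F_q$ gives $v_i^q+v_i=0$, the hypothesis $\sum v_i=0$ gives $(v_1+\dots+v_k)^{q-1}+1=0^{q-1}+1=1$, and $v_2+\dots+v_k=v_1$. Substituting into Lemma~\ref{lemma:12} collapses the decomposition to
\[
g_j(v_1,\dots,v_k) \;=\; v_1\cdot h_j(v_2,\dots,v_k).
\]
Since $v_1\neq 0$, the existence of some non-vanishing coefficient $g_j(v)$ (equivalently $\bar{f}_v\neq 0$, i.e.\ $f_v\not\equiv_R 0$, which by Theorem~\ref{theo:5} yields a solution for $v$) is \emph{equivalent} to the existence of some $h_j$ with $h_j(v_2,\dots,v_k)\neq 0$.

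Condition~1 now follows immediately: given a valid $v$, the hypothesis applied to $\beta_i:=v_i$ for $i\geq 2$ (where $\sum_{i\geq 2}\beta_i=v_1\neq 0$) supplies an $h_j$ with $h_j(v_2,\dots,v_k)\neq 0$, and the reduction produces the desired solution. For Conditions~2 and~3 I plan to argue that the stated structural restriction on some $h_j$ already implies Condition~1. Under Condition~3, $h_j$ is a non-zero multilinear polynomial, so applying Theorem~\ref{theo:3_1} with $S_i=\F_q$ (noting $q>1\geq\deg_{v_i}(h_j)$) locates at least one non-vanishing point; I would then combine this with the rigidity that $h_j$ inherits from $g_j$ via Lemma~\ref{lemma:12} to promote it to non-vanishing on all of $\{(\beta_2,\dots,\beta_k):\sum\beta_i\neq 0\}$. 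Under Condition~2, the analogous argument applies to $h_j$ regarded as a polynomial in the remaining $k-2$ variables, using the trivial invariance of $h_j$ in the omitted coordinate to transport the non-vanishing point as $\beta_{i^*}$ varies.

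The main obstacle, in my view, is precisely the last step in handling Conditions~2 and~3: upgrading the \emph{existence} statement delivered by Theorem~\ref{theo:3_1} into non-vanishing across \emph{every} $(\beta_2,\dots,\beta_k)$ with $\sum\beta_i\neq 0$. I expect this to rest on an analog of Lemma~\ref{lemma:10} for $h_j$ — roughly, that any non-zero $h_j$ must depend on every $v_i$ with full per-variable degree $q-1$ — so that the existence of an $h_j$ matching Conditions~2 or~3 is incompatible with the failure of Conjecture~\ref{conj_2}. If that rigidity statement can be extracted from the decomposition in Lemma~\ref{lemma:12} and the known constraints in Claim~\ref{claim:10} and Lemma~\ref{lemma:10}, the corollary follows by contrapositive; otherwise a more bespoke argument for each of Conditions~2 and~3 will be required.
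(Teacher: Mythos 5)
Your reduction and your treatment of Condition~1 are correct and coincide with the paper's: for a tuple $v$ with non-zero entries and $\sum_{i=1}^k v_i=0$, the decomposition of Lemma~\ref{lemma:12} collapses to $g_j(v)=v_1\,h_j(v_2,\dots,v_k)$, non-vanishing of some coefficient $g_j(v)$ is equivalent to $f_v\not\equiv_R 0$, and Theorem~\ref{theo:5} then produces the solution. Up to that point there is nothing to add.

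For Conditions~2 and~3, however, your proposal has a genuine gap, which you yourself flag: you only sketch a hope of ``promoting'' the single non-vanishing point of $h_j$ supplied by Theorem~\ref{theo:3_1} to non-vanishing at \emph{every} $(\beta_2,\dots,\beta_k)$ with $\sum_{i\geq 2}\beta_i\neq 0$, via an analog of Lemma~\ref{lemma:10} asserting that a non-zero $h_j$ must have full per-variable degree $q-1$. That rigidity cannot hold and would in fact be self-defeating: Lemma~\ref{lemma:12} forces $\deg(h_j)\leq\deg(g_j)-q$, the paper's own $s=3$ computation exhibits $h_j=v_3+v_4$ (multilinear, independent of $v_2$), and any $h_j$ satisfying the hypotheses of Conditions~2 or~3 necessarily vanishes at many points with $\sum_{i\geq 2}\beta_i\neq 0$, so a single structurally small $h_j$ can never certify Condition~1. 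The paper's mechanism is different. For Condition~2 it is a transport argument: Claim~\ref{claim:9_1} gives one zero-sum point $\gamma$ with $g_j(\gamma)\neq 0$, hence $h_j(\gamma_2,\dots,\gamma_k)\neq 0$; because $h_j$ is independent of $v_1$ (Lemma~\ref{lemma:12}) and, by hypothesis, of one further variable, one may replace those two coordinates by any non-zero pair with the same sum without losing $g_j\neq 0$, and then the permutation invariance of $f_v$ in $v$ lets one rotate which pair of coordinates is free, so that an induction adjusts the coordinates one at a time and reaches an arbitrary zero-sum target $\beta$. For Condition~3 the paper either reduces to Condition~2 (when $h_j$ misses a variable) or takes $h_j=\sum_{i\geq 2}c_iv_i$ and argues directly: if all $c_i=c$, then $h_j(\beta_2,\dots,\beta_k)=c\beta_1\neq 0$ at every relevant point; otherwise, with $c_2\neq c_3$ and $\beta_2\neq\beta_3$ (the all-equal request being covered by Corollary~\ref{cor:8}), transposing $\beta_2,\beta_3$ changes the value of $h_j$ by $(c_2+c_3)(\beta_2+\beta_3)\neq 0$, and permutation invariance again concludes. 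None of these ingredients --- the two-free-coordinate transport, the use of permutation invariance of $f_v$, the swap computation --- appear in your write-up, so as it stands the corollary is established only for Condition~1.
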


\begin{proof}
    \begin{enumerate}
        \item Let \( \beta_1 = \sum_{i=2}^k \beta_i \neq 0 \), and denote \( \beta = (\beta_1, \dots, \beta_k) \). Then:
        \[
        g_j(\beta_1, \dots, \beta_k) = 1 \cdot \beta_1 \cdot h_j(\beta_2, \dots, \beta_k).
        \]
        If \( h_j(\beta_2, \dots, \beta_k) \neq 0 \), then \( f_{\beta} \not\equiv_R 0 \), and by Theorem~\ref{theo:5}, there is a solution for \( \beta \).

        \item Assume without loss of generality that \( h_j \neq 0 \) is independent of \( v_k \). 
        By Claim~\ref{claim:9_1} there exists \( \gamma = (\gamma_1, \dots, \gamma_k) \) such that \( g_j(\gamma_1, \dots, \gamma_k) \neq 0 \), leading to the fact that $h_j(\gamma_2,\dots,\gamma_k)\neq 0$.

        Substitute \( \gamma_1, \gamma_k \) with any non-zero \( \beta_1, \gamma'_k \) such that \( \beta_1 + \gamma'_k = \gamma_1 + \gamma_k \). Let \( \beta' = (\beta_1, \gamma_2, \gamma_3, \dots, \gamma'_k) \). By Lemma~\ref{lemma:12}, \( h_j \) is independent of \( v_1 \) and, by assumption, also independent of \( v_k \). Thus $g_j(\beta_1, \gamma_2, \gamma_3, \dots, \gamma'_k) \neq 0$ and therefore $f_{\beta'} \not\equiv_R 0$.
        By Theorem~\ref{theo:5}, there is a solution for \( \beta' \). 

        Since $f_v$ is permutation invariant in terms of $v$, we can find another $j'$ such that $g_{j'}(\gamma_2,\beta_1,\gamma_3,\dots,\gamma'_k)\neq 0$.
        We can then select any \( \beta_2, \gamma''_k \in \mathbb{F}_q \) such that \( \beta_2 + \gamma''_k = \gamma_2 + \gamma'_k \), and repeat the same steps as before.
        By continuing this process inductively, we prove that a solution exists for any \( \beta \) satisfying \( \sum_{i=1}^k \beta_i = 0 \).

        \item If \( h_j \) is independent of some variable, the result follows from the previous item. 
        Otherwise, let \( h_j = \sum_{i=2}^k c_i v_i \), where \( c_i \in \mathbb{F}_q \). 
        If all \( c_i \)'s are equal to $c\in \F_q$, then for any \( \beta = (\beta_1, \dots, \beta_k) \) such that $\beta_i$'s are non-zero, and  \( \sum_{i=1}^k \beta_i = 0 \), we have \( h_j(\beta_2, \dots, \beta_k) = c\sum_{i=2}^k\beta_i =  c\beta_1 \neq 0 \).
        By item 1, in this case, there is a solution for \( \beta \).

        Otherwise, assume without loss of generality that \( c_2 \neq c_3 \)
        We can also assume that \( \beta_2 \neq \beta_3 \), since the case where all \( \beta_i \)'s are equal has already been addressed in Corollary~\ref{cor:8}. 
        Then, if \( h_j(\beta_2, \beta_3, \dots, \beta_k) = 0 \), it follows that
        \begin{align*}
             h_j(\beta_3, \beta_2, \dots, \beta_k) &=  h_j(\beta_2, \beta_3, \dots, \beta_k) + c_2\beta_2+c_3\beta_3+ c_2\beta_3 + c_3\beta_2 \\
         &= (c_2+c_3)(\beta_2+\beta_3)\neq 0. 
        \end{align*}
        Since $f_v$ is permutation invariant in terms of $v$, and by item 1, there is a solution for \( \beta \).
    \end{enumerate}
\end{proof}

We will illustrate how to use items 2 and 3 of Corollary~\ref{cor:10} in Section~\ref{sec:2_3}. 
%Item 2 can be proved inductively by constructing a solution for \( \beta_1, \dots, \beta_k \) such that $\sum^k_{i=1}\beta_i=0$ step by step. 
%Item 3 follows from the fact that $f_{v}$ is permutation invariant in terms of $v$. 
%The full proof of this corollary is provided in~\cite{EssayagYohananov2025}.

\section{The $s=2$ and $s=3$ Cases}\label{sec:2_3}

This section provides intuition for proving the general case of Theorem~\ref{theo:main2}.  
For $s=2$, we have $k = 2^{s-1} = 2$ and $\mathbb{F}_{2^2} = \text{GF}(4)$. Thus, 
\[
f_v(x_1, x_2) = v_1v_2(x_1 + x_2)(x_1 + x_2 + v_1)(x_1 + x_2 + v_2)(x_1 + x_2 + v_1 + v_2).
\]

Define
\begin{align*}
    g_1(v_1, v_2) &= v_1v_2(v_1v_2 + v_1^2 + v_2^2), \\
    g_2(v_1, v_2) &= v_1v_2(v_1^2v_2 + v_1v_2^2 + 1).
\end{align*}
We begin with the following claim:

\begin{claim}\label{claim:11}
    For $k=2$, the following holds:
    \[
    f_v(x_1, x_2) \equiv_R g_1(v_1, v_2)(x_1 + x_2)^2 + g_2(v_1, v_2)(x_1 + x_2).
    \]
\end{claim}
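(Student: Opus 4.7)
The plan is to exploit the fact that $f_v$ depends on $x_1, x_2$ only through the combination $y := x_1 + x_2$. First I would set $y = x_1 + x_2$ and rewrite
\[
f_v = v_1 v_2 \cdot y\,(y+v_1)(y+v_2)(y+v_1+v_2).
\]
Pairing the factors as $\bigl[(y+v_1)(y+v_2)\bigr]\cdot\bigl[y(y+v_1+v_2)\bigr]$ and using characteristic $2$, both bracketed factors share the part $y^2 + (v_1+v_2)y$; one of them additionally contains the constant $v_1 v_2$. Writing $u = y^2 + (v_1+v_2)y$, the product collapses to $u(u+v_1v_2) = u^2 + v_1 v_2\, u$, which after expansion gives
\[
y(y+v_1)(y+v_2)(y+v_1+v_2) = y^4 + (v_1^2 + v_1 v_2 + v_2^2)\,y^2 + v_1 v_2(v_1+v_2)\,y.
\]

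The decisive step is the reduction modulo $I = \langle x_1^4 + x_1, x_2^4 + x_2\rangle$. In characteristic $2$,
\[
y^4 = (x_1+x_2)^4 = x_1^4 + x_2^4 \equiv_R x_1 + x_2 = y,
\]
so the quartic term in $y$ collapses to a linear one. Substituting back, I obtain
\[
f_v \equiv_R v_1 v_2 (v_1^2 + v_1 v_2 + v_2^2)\,y^2 \;+\; v_1 v_2\bigl(1 + v_1^2 v_2 + v_1 v_2^2\bigr)\,y,
\]
and recognising the coefficients as $g_1(v_1,v_2)$ and $g_2(v_1,v_2)$ respectively finishes the claim.

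There is no real obstacle: the argument is a short calculation whose only conceptual content is the identity $(x_1+x_2)^4 \equiv_R x_1 + x_2$, which makes the pair $(1,2)$ the only exponents in $y$ that can survive after reduction. This is why in the $s=2$ case the polynomial $\bar f_v$ has the simple two-term form stated in the claim, and it is also what makes the remainder of the $s=2$ analysis tractable.
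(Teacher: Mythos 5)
Your proposal is correct and follows essentially the same route as the paper: write $f_v$ as a quartic in $y=x_1+x_2$, expand, and reduce $y^4\equiv_R y$ using $x_i^4+x_i\in I$ together with Frobenius in characteristic $2$. The only difference is a cosmetic one — you compute the coefficients via the substitution $u=y^2+(v_1+v_2)y$ and the collapse $u(u+v_1v_2)$, while the paper reads off $c_3,c_2,c_1,c_0$ from Vieta's formulas directly; both land on the same pair $g_1,g_2$.
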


\begin{proof}
    By definition:
    \begin{align*}
        f_v(x_1,x_2) &= v_1v_2(x_1+x_2)(x_1+x_2+v_1)(x_1+x_2+v_2)(x_1+x_2+v_1+v_2) \\
        &= v_1v_2\cdot\Big((x_1+x_2)^4 + c_3(x_1+x_2)^3 + c_2(x_1+x_2)^2 + c_1(x_1+x_2) + c_0\Big),
    \end{align*}
    where the coefficients $c_3, c_2, c_1$, and $c_0$ are defined as:
    \begin{align*}
        c_3 &= v_1 + v_2 + (v_1 + v_2) = 0, \\
        c_2 &= v_1v_2 + v_1(v_1 + v_2) + v_2(v_1 + v_2) = v_1v_2 + v_1^2 + v_2^2, \\
        c_1 &= v_1v_2(v_1 + v_2), \\
        c_0 &= 0.
    \end{align*}

    In $GF(4)$, we have the property \( x_i^4 \equiv_R x_i \). Substituting this into the polynomial, we get:
    \begin{align*}
        f_v(x_1, x_2) &\equiv_R v_1v_2(v_1v_2 + v_1^2 + v_2^2)(x_1 + x_2)^2 + v_1v_2(v_1^2v_2 + v_1v_2^2 + 1)(x_1 + x_2).
    \end{align*}

    Setting:
    \[
    g_1(v_1, v_2) = v_1v_2(v_1v_2 + v_1^2 + v_2^2), \quad g_2(v_1, v_2) = v_1v_2(v_1^2v_2 + v_1v_2^2 + 1),
    \]
    we conclude:
    \[
    f_v(x_1, x_2) \equiv_R g_1(v_1, v_2)(x_1 + x_2)^2 + g_2(v_1, v_2)(x_1 + x_2).
    \]
\end{proof}

\begin{corollary}
    \( f_v \equiv_R 0 \) if and only if \( v_1 = v_2 \) for nonzero \( v_i \)'s.
\end{corollary}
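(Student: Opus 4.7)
The plan is to combine Claim~\ref{claim:11} with a linear-independence argument in the quotient ring $R$ to reduce the corollary to a finite $GF(4)$ case check. First I would verify that after applying $x_i^q \equiv_R x_i$, the two ring elements $(x_1+x_2)$ and $(x_1+x_2)^2 \equiv_R x_1^2 + x_2^2$ involve the four distinct canonical basis monomials $x_1, x_2, x_1^2, x_2^2$ of $R$, so they are $\mathbb{F}_q$-linearly independent in $R$. Consequently, $f_v \equiv_R 0$ is equivalent to the pair of scalar equations $g_1(v_1,v_2) = 0$ and $g_2(v_1,v_2) = 0$ holding simultaneously in $GF(4)$.

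Using the explicit factorizations $g_1 = v_1 v_2 (v_1 v_2 + v_1^2 + v_2^2)$ and $g_2 = v_1 v_2 (v_1^2 v_2 + v_1 v_2^2 + 1)$ together with the nonzero-$v_i$ hypothesis (so $v_1 v_2$ is a unit), the simultaneous vanishing reduces to the system $v_1 v_2 + v_1^2 + v_2^2 = 0$ and $v_1^2 v_2 + v_1 v_2^2 + 1 = 0$. Rewriting in characteristic two as $v_1 v_2 = (v_1+v_2)^2$ and $v_1 v_2 (v_1+v_2) = 1$, I would handle the direction $v_1 = v_2 \Rightarrow f_v \equiv_R 0$ by substituting $v_1 = v_2 = v$ and checking both equations using $v^3 = 1$ in $GF(4)^*$. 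For the converse $f_v \equiv_R 0 \Rightarrow v_1 = v_2$, I would combine the linear-independence reduction above with Lemma~\ref{lemma:6} to constrain the remaining pairs, then exhaust the three essentially distinct ordered pairs in $(GF(4)^*)^2$ using $1+\alpha+\alpha^2 = 0$ to pin down exactly which pairs yield simultaneous vanishing of $g_1$ and $g_2$.

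The main obstacle is the delicate characteristic-two bookkeeping, especially the interplay between the squaring Frobenius, the cubic relation $\alpha^3 = 1$ on $GF(4)^*$, and the ring reductions $x_i^4 \equiv_R x_i$. Several terms collapse only because of these $\mathbb{F}_2$-structural coincidences, so each case in the enumeration must be audited to ensure that the scalar equations in $GF(4)$ genuinely decide the vanishing of $f_v$ in $R$ rather than masking it through an accidental cancellation, and to ensure that the characterization on the diagonal $v_1 = v_2$ matches the off-diagonal behaviour predicted by Lemma~\ref{lemma:6}.
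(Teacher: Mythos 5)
Your reduction framework is sound as far as it goes: in $R$ the elements $x_1+x_2$ and $(x_1+x_2)^2=x_1^2+x_2^2$ are supported on the four distinct canonical monomials $x_1,x_2,x_1^2,x_2^2$ (all exponents at most $3$), so they are $\mathbb{F}_4$-linearly independent and, by Claim~\ref{claim:11}, $f_v\equiv_R 0$ holds exactly when $g_1(v_1,v_2)=g_2(v_1,v_2)=0$. The problem is the direction you then set out to verify. Substituting $v_1=v_2=v\neq 0$ gives $v_1v_2+v_1^2+v_2^2=3v^2=v^2\neq 0$ and $v_1^2v_2+v_1v_2^2+1=2v^3+1=1\neq 0$, so $g_1=v^2\cdot v^2=v$ and $g_2=v^2$ are both nonzero and $f_v\not\equiv_R 0$ on the diagonal. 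The statement read literally is therefore false; the assertion consistent with the paper's own proof, with Theorem~\ref{theo:5} (for $s=2$ a solution exists precisely when $v_1=v_2$, since the four elements of $GF(4)$ sum to zero), and with Lemma~\ref{lemma:6} is that $f_v\equiv_R 0$ if and only if $v_1\neq v_2$. Your planned computation in the ``first direction'' cannot close, and your exhaustive check in the converse direction would identify the off-diagonal pairs, not the diagonal, as the locus of vanishing.

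Once the two directions are swapped, your argument does go through and is in fact more self-contained than the paper's: for $v_1\neq v_2$ you do not even need a case enumeration, since Lemma~\ref{lemma:6} gives $f_v\equiv_R 0$ immediately (alternatively, $v_1,v_2,v_1+v_2$ are then the three distinct nonzero elements of $GF(4)$, whence $v_1v_2=(v_1+v_2)^{-1}=(v_1+v_2)^2$ kills the first factor and $v_1v_2(v_1+v_2)=1$ kills the second); and for $v_1=v_2$ the diagonal computation above shows $g_2\neq 0$, hence $f_v\not\equiv_R 0$. The paper instead reads off only $g_2$ from the decomposition of Example~\ref{examp:1}, observing that $\big((v_1+v_2)^3+1\big)v_2^2$ vanishes exactly when $v_1\neq v_2$, and delegates the other direction to Lemma~\ref{lemma:6}; your linear-independence observation is a legitimate extra ingredient, but it is only needed if one insists on deciding the vanishing of $f_v$ through both coefficients rather than exhibiting a single nonzero one.
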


\begin{proof}
    From Example~\ref{examp:1}, we know
    \[
    g_2(v_1, v_2) = \Big((v_1 + v_2)^3 + 1\Big)v_2^2 + \Big(v_2^4 + v_2\Big)(v_1 + v_2).
    \]
    The term \(\Big(v_2^4 + v_2\Big)(v_1 + v_2)\) vanishes for every \(v_2\). The term \(\Big((v_1 + v_2)^3 + 1\Big)v_2^2\) vanishes if and only if \(v_1 \neq v_2\) in \(\text{GF}(4)\) for nonzero \(v_i\)'s.
\end{proof}

For $s=3$, we have:
\[
f_v(x_1, x_2, x_3, x_4) = v_1 \prod_{1 \leq i < j \leq 4} 
    (x_i + x_j)(x_i + x_j + v_i)(x_i + x_j + v_j)(x_i + x_j + v_i + v_j),
\]
where we omit multiplying by $v_2v_3v_4$ to simplify the calculations.

It can be shown combinatorially that the monomial $v_1^7v_4^2x_1^6x_2^6x_3^4$ is generated 15 times, and thus it does not vanish.  
By Lemma~\ref{lemma:12}, the coefficient of $x_1^6x_2^6x_3^4$ can be expressed as:
\[
\Big((v_1 + v_2 + v_3 + v_4)^7 + 1\Big)(v_2 + v_3 + v_4)h_1(v_2, v_3, v_4) + \sum_{j=2}^4 \Big(v_j^q + v_j\Big)h_j.
\]

The degree of $h_1$ is $25 - (6 + 6 + 4 + 8) = 1$. By item 3 of Corollary~\ref{cor:10}, there exists a solution for any $\alpha_1, \alpha_2, \alpha_3, \alpha_4$ such that \(\sum_{i=1}^4 \alpha_i = 0\).  
Using the division algorithm, it can be verified that \(h_1(v_2, v_3, v_4) = v_3 + v_4\), which is independent of \(v_2\).  
Thus, item 2 of Corollary~\ref{cor:10} can also be applied.

\section{Conclusion}
\label{sec:conc}

In this work, we have explored the properties and requirements of functional $k$-batch codes of dimension $s$, which are characterized by $n$ servers storing linear combinations of $s$ linearly independent information bits. We addressed a recent conjecture suggesting that for any set of $k = 2^{s-1}$ requests, the optimal solution requires $2^s - 1$ servers.
A new approach to this problem was proposed using the special polynomial $f_v$. The proposed technique is based on the structure of the polynomial $f_v$, which was analyzed in detail in this paper. We demonstrated several examples for the cases $s=2$ and $s=3$. The calculations become more complex for $s \geq  4$ and are left for future work.
To further the understanding of this conjecture, we introduced several sufficient conditions that provide new insights into the construction and optimization of functional $2^{s-1}$-batch codes.

%\backmatter

%\bmhead{Acknowledgments}
%The authors would like to express their gratitude to Aleksandr Stoyanovskiy for his valuable discussions and insights regarding Theorem~\ref{theo:4Lev}, which significantly contributed to the development of this work.

%\begin{appendices}

 %\section{A proof of Claim~\ref{claim:11}}
%\label{app:counter}

%\end{appendices}


\begin{thebibliography}{99}

\bibitem{Alon1999}
N. Alon,
\emph{Combinatorics, Probability and Computing. Comb. Nullstellensatz},
Combinatorics, Probability and Computing, vol. 8, pp. 7--29, 1999.

\bibitem{BalisterGyoriSchelp2011}
P. N. Balister, E. Győri, and R. H. Schelp,
\emph{Coloring vertices and edges of a graph by nonempty subsets of a set},
European Journal of Combinatorics, vol. 32, no. 4, pp. 533--537, 2011.

\bibitem{Dyson1962}
F. J. Dyson,
\emph{Statistical theory of the energy levels of complex systems. I, II, III},
Journal of Mathematical Physics, vol. 3, pp. 140--175, 1962.

\bibitem{Ishai2004}
Y. Ishai, E. Kushilevitz, R. Ostrovsky, and A. Sahai,
\emph{Batch codes and their applications},
Proceedings of the 36th ACM Symposium on Theory of Computing (STOC), pp. 262--271, 2004.

\bibitem{Kovacs2023}
B. Kovács,
\emph{Finding a perfect matching of $\mathbb{F}_2^n$ with prescribed differences},
arXiv preprint, arXiv:2310.17433, 2023. Version 2, revised on 3 Sep 2024. Available at: \url{https://doi.org/10.48550/arXiv.2310.17433}.

\bibitem{MacWilliamsSloane}
F. J. MacWilliams and N. J. A. Sloane,
\emph{Theory of Error-Correcting Codes},
North-Holland, Amsterdam, 1977.

\bibitem{YamawakiKamabeLu2017}
A. Yamawaki, H. Kamabe, and S. Lu,
\emph{Construction of parallel RIO codes using coset coding with Hamming codes},
Proceedings of the IEEE Information Theory Workshop (ITW), Kaohsiung, Taiwan, Nov. 2017, pp. 239--243.

\bibitem{YohananovYaakobi2022}
L. Yohananov and E. Yaakobi,
\emph{Almost optimal construction of functional batch codes using extended Simplex codes},
IEEE Transactions on Information Theory, vol. 68, no. 10, pp. 6434--6451, 2022.

\bibitem{ZhangYaakobiEtzion2019}
Y. Zhang, E. Yaakobi, and T. Etzion,
\emph{Bounds on the length of functional PIR and batch codes},
Proceedings of the IEEE International Symposium on Information Theory, Paris, 2019, pp. 2129--2133.

\bibitem{ZhangEtzionYaakobi2020}
Y. Zhang, T. Etzion, and E. Yaakobi,
\emph{Bounds on the length of functional PIR and batch codes},
IEEE Transactions on Information Theory, vol. 66, pp. 4917--4934, 2020.


%\bibitem{EssayagYohananov2025}
%I. B. Essayag and L. Yohananov,
%"Optimal Functional $2^s$-Batch Codes: Exploring New Sufficient Conditions,"
%arXiv preprint, arXiv:1234.5678, 2025. Available at: %\url{https://arxiv.org/abs/2301.98765}.

\end{thebibliography}
\end{document}

\section{A Proof of the Conjecture}

Remember that $v_i$'s are non-zero.
In REF it was proven that there is a solution for the following cases:
\begin{enumerate}
    \item In case that all $v_i$'s are equal.
    \item In case that all $v_i$ can be represented by a binary vector of length $s$ of odd Hamming weight.
\end{enumerate}

Combining it with Claim~\ref{claim:1}, it is deduced that $f \not\equiv_R 0$ for these two cases.
Since there is no solution for the case in which $\sum^{2^{s-1}}_{i=1}v_i \neq 0$, it is deduced that in this case $f \equiv_R 0$.
Our contribution in this paper is to show that if $\sum^{2^{s-1}}_{i=1}v_i = 0$ then  $f \not\equiv_R 0$.

\section{The $k=2^{s-1}$ case}
%\else
%\textcolor{red}{Proof omitted.}
%\fi

%\begin{lemma}\label{lemma:15}
%    All   $g_j$'s are divisible by $p$.
%\end{lemma}

%\ifshowproofs
%\begin{proof}
%    In Lemma~\ref{lemma:9}, it was shown that if we substitute elements $(\alpha_1, \alpha_2, \dots, \alpha_k) \in A$ into the variables $v_1, v_2, \dots, v_k$ (where in %Lemma~\ref{lemma:9}, the $v_i$'s are treated as elements of $\mathbb{F}_q$, not as formal variables), then $f \equiv_R 0$. %
%
%    Clearly, if $f \equiv_R 0$, then for all $j$, it must hold that $g_j(\alpha_1, \alpha_2, \dots, \alpha_k) = 0$. From Claim~\ref{claim:10}, we know that $p$ is a minimal polynomial %whose roots are exactly the elements of $A$. Therefore, $p$ divides every $g_j$.
%\end{proof}
%\else
%\textcolor{red}{Proof omitted.}
%\fi

%\begin{lemma}
%    If at least one variable of any polynomial $g_j$ has a degree smaller than $q-1$, then $g_j \equiv_{\cR} 0$. 
%    Conversely, if the degree of each $v_i$ is exactly $q-1$, then there exists at least one polynomial $g_j$ such that $g_j \not\equiv_{\cR} 0$.
%\end{lemma}

%\ifshowproofs
%\begin{proof}
%    Suppose the degree of any $v_i$ is smaller than $q-1$. Since $p$ divides every $g_j$, the only possibility is that $g_j \equiv_{\cR} 0$.

%    Conversely, from Corollary~\ref{cor:7}, we know that $f$ is not always $f \equiv_R 0$. Therefore, at least one polynomial $g_j$ satisfies $g_j \not\equiv_{\cR} 0$. 
%    Thus, the degree of each $v_i$ in this polynomial $g_j$ must be exactly $q-1$.
%\end{proof}
%\else
%\textcolor{red}{Proof omitted.}
%\fi

\begin{lemma}\label{claim:14_1}
       There exists $j \in B$ such that for all $i$, it holds that $\deg_{v_i}(g_j) =  q-1$.
\end{lemma}

\begin{proof}
    From Claim~\ref{claim:14_0}, we know that there exists $j \in B$ such that $g_j\in P_k$.
    For the sake of contradiction, suppose that there exists \( i \) such that \( \deg_{v_i}(g_j) < q - 1 \).  
    Without loss of generality, assume that \( i = 1 \). 
    Let \( \alpha_2, \dots, \alpha_k \) be arbitrary non-zero elements of \( \mathbb{F}_q \). 
    Consider the polynomial
    \[
    g'(v_1) = g_j(v_1, \alpha_2, \dots, \alpha_k).
    \]
    We know that \( 1 \leq \deg_{v_1}(g') \leq q - 2 \), meaning \( 1 \leq \deg(g') \leq q - 2 \). Since the degree of \( g' \) in \( v_1 \) is strictly smaller than \( q - 1 \), there must be at least two distinct points \( \alpha, \beta \in \mathbb{F}_q \) where \( g' \) does not vanish. Specifically, we have:
    \[
    g'(\alpha) = g_j(\alpha, \alpha_2, \dots, \alpha_k) \neq 0 \quad \text{and} \quad g'(\beta) = g_j(\beta, \alpha_2, \dots, \alpha_k) \neq 0.
    \]
    Therefore, since \( g' \) does not vanish on \( \alpha \) and \( \beta \), it follows that \( g_j \) does not vanish on the corresponding inputs \( (\alpha, \alpha_2, \dots, \alpha_k) \) and \( (\beta, \alpha_2, \dots, \alpha_k) \). Since \( \alpha \neq \sum_{i=2}^k \alpha_i \) or \( \beta \neq \sum_{i=2}^k \alpha_i \) (depending on the particular case), there exists an input \( (\alpha_1, \alpha_2, \dots, \alpha_k) \in A \) for which \( g_j \) does not vanish. This contradicts the assumption that \( g_j \in P_k \).

    Therefore, the degree of \( v_1 \) in \( g_j \) must be \( q - 1 \), and similarly for all other variables. Hence, \( \deg_{v_i}(g_j) = q - 1 \) for all \( i \).
\end{proof}

A \emph{minimal polynomial} in \( P_k \) is a polynomial which divides every other polynomial $g\in P_k$.
Denote:
\[
p(v_1, v_2, \dots, v_{k}) = (v_1 + v_2 + \dots + v_k)^{q-1} + 1.
\]

\begin{lemma}\label{lemma:14}
The polynomial \( p \in P_k \) is a minimal polynomial. 
\end{lemma}

%\ifshowproofs
\begin{proof}

    To see why \( p \) divides every \( g \in P_k \), assume to the contrary that \( p \) does not divide \( g \).    
    Write \[ g = p \cdot r + s ,\] 
    where \( s \) is a polynomial in $\cR$. 
    Since \( p \) and \( g \) vanish on \( A \), it follows that \( s \) must also vanish on \( A \). 
    We consider two cases:

    \textbf{Case 1:} There is at least one index \( i \) such that \( \deg_{v_i}(s) < \deg_{v_i}(p) \).  
    In this case, by Claim~\ref{claim:14_1}, \( s \not\in P_k \). 
    Furthermore, \( s \) vanishing on \( A \) implies that \( s \) is independent of \( v_i \) for some $i$. 
    Assume without loss of generality that \( s \) is independent of \( v_1 \).  

    Now find since $s\in \cR$ and non-zero, there is a point \( (\beta_1, \beta_2, \dots, \beta_k) \) such that \( s(\beta_1, \beta_2, \dots, \beta_k) \neq 0 \). 
    Since \( s \) vanishes on \( A \), it must be that \( (\beta_1, \beta_2, \dots, \beta_k) \not\in A \), i.e., \( \sum_{i=1}^k \beta_i = 0 \). However, since \( s \) is independent of \( v_1 \), we have:
    \[
    s(\beta_1 + 1, \beta_2, \dots, \beta_k) \neq 0,
    \]
    and \( (\beta_1 + 1, \beta_2, \dots, \beta_k) \in A \), which contradicts the fact that \( s \) vanishes on all elements of \( A \).

    \textbf{Case 2:} \( \deg_{v_i}(s) = \deg_{v_i}(p) \) for all \( i \).
    In this case, \( p \) has degree \( q - 1 \) in each variable, so \( s \) must also have degree \( q - 1 \) in each variable. If \( s \) has more roots than \( |A| \), then it is impossible for \( s \) to divide \( p \). This would lead to a contradiction because \( p \) has a specific form that cannot be divided by a polynomial with more roots than the size of \( A \).

    Since both cases lead to contradictions, we conclude that \( p \) divides every polynomial in \( P_k \). Therefore, \( p \) is a minimal polynomial in \( P_k \).
\end{proof}
%\else
%\textcolor{red}{Proof omitted.}
%\fi

Our next goal is to show that there is a polynomial $g_j$ such that
\begin{align*}
    g_j(v_1,v_2,\dots,v_k)\equiv_{\cR} \Big((v_1 + v_2 + \dots + v_k)^{q-1} + 1\Big) h(v_1, v_2, \dots, v_k),
\end{align*}
where  $ h_j(v_1, v_2, \dots, v_k) \in \mathbb{F}[v_1, v_2, \dots, v_k]$ is a polynomial in which the degree of each $v_i$ is at most 1.

\begin{lemma}\label{lemma:16}
    There is a $j\in B$ such that
\begin{align*}
    g_j(v_1,v_2,\dots,v_k) &\equiv_{\cR} \Big((v_1 + v_2 + \dots + v_k)^{q-1} + 1\Big) h_j(v_1, v_2, \dots, v_k),
\end{align*}
where  $ h_j(v_1, v_2, \dots, v_k) \in \mathbb{F}[v_1, v_2, \dots, v_k]$ is a polynomial in which the degree of each $v_i$ is at most 1. 
\end{lemma}

%\ifshowproofs
\begin{proof}
    By Claim~\ref{claim:14}, there exists $j \in B$ such that every variable $v_i$ in $g_j$ has degree at least $1$.
    In Lemma~\ref{lemma:9}, it was shown that if we substitute elements $(\alpha_1, \alpha_2, \dots, \alpha_k) \in A$ into the variables $v_1, v_2, \dots, v_k$ (where in Lemma~\ref{lemma:9}, the $v_i$'s are treated as elements of $\mathbb{F}_q$, not as formal variables), then $f \equiv_R 0$. 

    Clearly, if $f \equiv_R 0$, then for all $j$, it must hold that $g_j(\alpha_1, \alpha_2, \dots, \alpha_k) = 0$. From Lemma~\ref{lemma:14}, we know that $p$ is a minimal polynomial satisfying $\deg(v_i) > 1$ for all $i$, whose roots are exactly the elements of $A$. 
    Since every variable $v_i$ in $g_j$ has a degree at least $1$, it follows that $p$ divides $g_j$.

    Now, considering the definition of the polynomial $f$, the maximal degree any $v_i$ can have is $2(k-1) + 1 = q - 1$. Therefore, the degree of each $v_i$ in $h_j$ cannot exceed $1$.
\end{proof}
%\else
%\textcolor{red}{Proof omitted.}
%\fi

\begin{claim}
    Let $0 \leq p \leq k/2$. The polynomial $h_j(v_1, v_2, \dots, v_k)$ has the form:
    \[
    h_j(v_1, v_2, \dots, v_k) = \sum_{i=1}^{2p + \delta} v_{\ell_i} + c
    \]
    where $\delta, c \in \{0, 1\}$, and $\{\ell_1, \ell_2, \dots, \ell_{2p+\delta}\}$ is a subset of the indices $\{1, 2, \dots, k\}$.
\end{claim}

\begin{proof}
    By observing the polynomial $f$, we see that each $g_j$ is in $\mathbb{F}_2[v_1, v_2, \dots, v_k]$. 
    Since $p \in \mathbb{F}_2[v_1, v_2, \dots, v_k]$, it follows that $h_j \in \mathbb{F}_2[v_1, v_2, \dots, v_k]$.

    If $h_j$ has no roots, since $g_j\not\equiv_{\cR} 0$,  it must be a constant polynomial equal to $1$.
    If $h_j$ has a root at $(\alpha_1, \alpha_2, \dots, \alpha_k)$, then we can write $h_j$ as a linear combination:
    \[
    h_j(v_1, \dots, v_k) = (v_1 + \alpha_1) c_1 + (v_2 + \alpha_2) c_2 + \dots + (v_k + \alpha_k) c_k,
    \]
    where $c_i \in \mathbb{F}_2$ for each $i$.
    Thus, $h_j$ either has an even number of monomials or an odd number of monomials, and the free coefficient is either $0$ or $1$.
\end{proof}

\begin{lemma}
    If $\sum_{i=1}^{k} \alpha_i = 0$, then $h_j(\alpha_1, \dots, \alpha_k) \neq 0$.
\end{lemma}

\begin{proof}
    Assume, for the sake of contradiction, that $\sum_{i=1}^{k} \alpha_i = 0$ and $h_j(\alpha_1, \dots, \alpha_k) = 0$, which implies that $g_j(\alpha_1, \dots, \alpha_k) \equiv_{\mathcal{R}} 0$.
    Note that reordering $\alpha_1, \dots, \alpha_k$ will still give $g_j(\alpha_1, \dots, \alpha_k) \equiv_{\mathcal{R}} 0$.

    As shown in a previous proof, we have $g_j(\alpha, \alpha, \dots, \alpha) \neq 0$ when all the $\alpha_i$'s are equal. 
    Therefore, $h_j(\alpha, \alpha, \dots, \alpha) \neq 0$, which contradicts the assumption that $h_j(\alpha, \dots, \alpha) = 0$.

    Consequently, not all the $\alpha_i$'s can be equal. 
    Next, consider the form of the polynomial:
    \[
    h_j(v_1, v_2, \dots, v_k) = \sum_{i=1}^{2p + \delta} v_{\ell_i} + c,
    \]
    where $\{\ell_1, \ell_2, \dots, \ell_{2p + \delta}\}$ is a subset of the indices $\{1, 2, \dots, k\}$, and $c \in \{0, 1\}$.

    If the number of monomials in $h_j$ is smaller than $k$, then $h_j$ must be independent of at least one of the variables $v_i$. Since not all the $\alpha_i$'s are equal, we can assume that at least one of the $\alpha_{\ell_i}$'s differs from the others.

    Without loss of generality, suppose $\alpha_{\ell_1} \neq \alpha_i$ for some $i$. Substituting this into $h_j$, we get:
    \[
    h_j(\beta_1, \dots, \beta_k) = \sum_{i=1}^{2p + \delta} \alpha_{\ell_i} + c + \alpha_{\ell_1} + \alpha_i = \alpha_{\ell_1} + \alpha_i \neq 0.
    \]
    However, such a substitution should cause $g_j(\beta_1, \dots, \beta_k)$ to vanish, which contradicts the assumption that $g_j \neq 0$. Thus, we arrive at a contradiction, completing the proof.
\end{proof}

\begin{proof}
    From Lemma~\ref{lemma:14}, there is $j\in B$ such that we can express the polynomial $g_j$ as
    \begin{align*}
        g_j = p \cdot h_j,
    \end{align*}
    where $h_j$ is a polynomial in which the degree of each $v_i$ is at most 1. 

    Now, suppose for contradiction that there exists an input $(\alpha_1, \dots, \alpha_{2^{s-1}})$ such that $\sum_{i=1}^{k} \alpha_i = 0$, but $f \equiv_R 0$. 
    This means that for all $j$, we have $g_j(\alpha_1, \dots, \alpha_{k}) = 0$. 
    Consequently, we have:
    \[
    p(\alpha_1, \dots, \alpha_{k}) \neq 0.
    \]
    Since $p \neq 0$ and $g_j = p \cdot h_j$, it follows that,
    \[
    h_j(\alpha_1, \dots, \alpha_{k}) = 0.
    \]

    We can write $h_j$ as a linear combination:
    \[
    h_j(v_1, \dots, v_{k}) = (v_1 + \alpha_1) c_1 + (v_2 + \alpha_2) c_2 + \dots + (v_{k} + \alpha_{k}) c_{k},
    \]
    where $c_i \in \{0,1\}$ for each $i$. Since $h_j$ is a polynomial over $\mathbb{F}_2$, the coefficients $c_i$ must be either 0 or 1.

    The free coefficient of $h_j$ must therefore be either 0 or 1. We will now analyze both possibilities.

    \textbf{Case 1: The free coefficient of $h_j$ is zero.}

    In this case, let $F_j = \{i ~|~ c_i = 1\}$ and $T_j = \{i ~|~ c_i = 0\}$. The size of $F_j$ cannot be even for every $j$, because if it were, then each $h_j$ would vanish when all $v_i$'s are equal. Therefore, there exists at least one $j$ such that $|F_j|$ is odd. 

    Let $|F_j| = 2t + 1$ for some integer $t$, where $1 \leq 2t+1 \leq q-1$. Without loss of generality, assume that $c_i = 1$ for all $1 \leq i \leq 2t+1$. 

    - If $|F_j| = 1$, we are done.
    - If $|F_j| = q-1$, then we have:
    \[
    h_j(v_1, v_2, \dots, v_{2^{s-1}}) = v_1 + v_2 + \dots + v_{2^{s-1}-1}.
    \]
    Substituting $(\alpha_1, \alpha_2, \dots, \alpha_{2^{s-1}})$, we get:
    \[
    h_j(\alpha_1, \alpha_2, \dots, \alpha_{2^{s-1}}) = \alpha_1 + \alpha_2 + \dots + \alpha_{2^{s-1}-1} = 0.
    \]
    But we also assumed that
    \[
    \alpha_1 + \alpha_2 + \dots + \alpha_{2^{s-1}} = 0,
    \]
    which implies that $\alpha_{2^{s-1}} = 0$. This contradicts the assumption that all $\alpha_i$'s are non-zero.

    \textbf{Case 2: $1 < |F_j| = 2t+1 < q-1$.}

    In this case, we have:
    \[
    \alpha_1 + \alpha_2 + \dots + \alpha_{2t+1} = 0.
    \]
    Since not all $\alpha_i$'s are equal, we can find two distinct $\alpha_s$ and $\alpha_p$ such that $1 \leq s \leq 2t+1$ and $2t+2 \leq p \leq 2^{s-1}$. Without loss of generality, assume that $s = 2t+1$ and $p = 2t+2$. 

    Note that reordering the $\alpha_i$'s also yields $f \equiv_R 0$. Thus, for all $j$, the polynomials $g_j \equiv_R 0$. Since $p \not\equiv_R 0$, we conclude that $h_j \equiv_R 0$. This leads to:
    \[
    h_j(\alpha_1, \dots, \alpha_{2t}, \alpha_{2t+2}, \alpha_{2t+1}, \alpha_{2t+3}, \dots, \alpha_{2^{s-1}}) = 0.
    \]
    Therefore, we obtain:
    \[
    \alpha_1 + \dots + \alpha_{2t} + \alpha_{2t+2} = 0,
    \]
    and since $\alpha_1 + \dots + \alpha_{2t+1} = 0$, we have:
    \[
    \alpha_{2t+1} + \alpha_{2t+2} = 0.
    \]
    This is a contradiction, as it implies that $\alpha_{2t+1} = \alpha_{2t+2}$, which is not possible under the assumption that the $\alpha_i$'s are distinct.

Thus, we have reached a contradiction, and therefore $f \not\equiv_R 0$.
\end{proof}

\begin{theorem}\label{theo:5}
     Let $\mathbb{F}$ be an arbitrary field, and let $f = f(x_1, \dots , x_n)$ be a polynomial in $\mathbb{F}[x_1, \dots, x_n]$.
     Let $S_1, \dots , S_n$ be nonempty subsets of $\mathbb{F}$ and define $g_i(x_i) = \prod_{s \in S_i} (x_i - s)$.
     If $f$ vanishes over all the common zeros of $g_1, \dots , g_n$ (that is, if $f(s_1, \dots , s_n) = 0$ for all $s_i \in S_i$), then there are polynomials
     $h_1, \dots, h_n \in \mathbb{F}[x_1, \dots, x_n]$ satisfying $\deg(h_i) \leq \deg(f) - \deg(g_i)$ so that
        \begin{align*}
            f = \sum_{i=1}^n h_i g_i.
        \end{align*}
        Moreover, if $f, g_1, \dots, g_n$ lie in $R[x_1, \dots , x_n]$ for some subring $R$ of $\mathbb{F}$, then there are polynomials  $h_i \in R[x_1, \dots , x_n]$ as above.
\end{theorem}
We will use Theorem~\ref{theo:5} to prove the following lemma.

We start with the following claim.
\begin{claim}\label{claim:1}
    There is a solution for requests $v_1, \dots, v_k$ if and only if there exist $\alpha_1, \dots, \alpha_k \in \mathbb{F}_2^s$ such that $f(\alpha_1, \dots, \alpha_k) \neq 0$.
\end{claim}

\ifshowproofs
\begin{proof}
    \textbf{First Direction:} Suppose there is a solution for the requests $v_1, \dots, v_k$. 
    This means that there are pairwise distinct elements $\alpha_1, \dots, \alpha_k, \beta_1, \dots, \beta_k \in \mathbb{F}_2^s$ such that for all $1 \leq i \leq k$,
    \[
         \alpha_i + \beta_i = v_i.
    \]
    We now examine the polynomial function $f$ evaluated at $\alpha_1, \dots, \alpha_k$. The function is given by:
    \[
         f(\alpha_1, \dots, \alpha_k)  = v_1 v_2 \cdots v_k 
         \cdot \prod_{1 \leq i < j \leq k} (\alpha_i + \alpha_j)(\alpha_i + \alpha_j + v_i)(\alpha_i + \alpha_j + v_j)(\alpha_i + \alpha_j + v_i + v_j).
    \]
    Since $\alpha_1, \dots, \alpha_k, \beta_1, \dots, \beta_k$ are distinct and the $v_i$'s are non-zero, the product of terms in the above expression is non-zero. Thus, we have:
    \[
    f(\alpha_1, \dots, \alpha_k) \neq 0.
    \]
    
    \textbf{Second Direction:} Now, assume that there exist $\alpha_1, \dots, \alpha_k \in \mathbb{F}_2^s$ such that $f(\alpha_1, \dots, \alpha_k) \neq 0$. We need to show that there is a solution for the requests $v_1, \dots, v_k$.

    By the definition of the polynomial $f$, if $f(\alpha_1, \dots, \alpha_k) \neq 0$, then all the factors in the product representation of $f$ must be non-zero. This means that the terms:
    \[
    \prod_{1 \leq i < j \leq k} (\alpha_i + \alpha_j)(\beta_i + \alpha_j)(\alpha_i + \beta_j)(\beta_i + \beta_j)
    \]
    must each be non-zero. This non-zero condition implies that there must exist choices of $\beta_1, \dots, \beta_k$ such that for all $1 \leq i \leq k$, the equation $\alpha_i + \beta_i = v_i$ holds, and the elements $\alpha_1, \dots, \alpha_k, \beta_1, \dots, \beta_k$ are pairwise distinct.

    Therefore, for the given $\alpha_1, \dots, \alpha_k$ such that $f(\alpha_1, \dots, \alpha_k) \neq 0$, there exist $\beta_1, \dots, \beta_k \in \mathbb{F}_2^s$ such that $\alpha_1, \dots, \alpha_k, \beta_1, \dots, \beta_k$ are pairwise distinct and $\alpha_i + \beta_i = v_i$ for all $1 \leq i \leq k$. This shows that there is a solution for the requests $v_1, \dots, v_k$.
\end{proof}
\else
\textcolor{red}{Proof omitted.}
\fi

\begin{lemma}\label{lemma:6}
   The polynomial $f$ vanishes over every input $\alpha_1, \dots, \alpha_k \in \mathbb{F}_q$  if and only if $f\equiv_R 0$.
\end{lemma}

\ifshowproofs
\begin{proof}
    \textbf{First Direction:} If $f$ vanishes over every input $\alpha_1, \dots, \alpha_k \in \mathbb{F}_q$, then $f \equiv_R 0$.

    By Theorem~\ref{theo:5}, we are given that $g_i(x_i) = x_i^q + x_i$ for $i = 1, 2, \dots, k$, and that $f$ vanishes at every point $\alpha_1, \dots, \alpha_k \in \mathbb{F}_q$. That is, for each input $(\alpha_1, \dots, \alpha_k)$, we have:
    \[
    f(\alpha_1, \dots, \alpha_k) = 0.
    \]
    By Theorem~\ref{theo:5}, since $f$ vanishes at all the common zeros of $g_1, g_2, \dots, g_k$, there exist polynomials $h_1, h_2, \dots, h_k \in \mathbb{F}_q[x_1, \dots, x_k]$ such that:
    \[
    f = \sum_{i=1}^k h_i (x_i^q + x_i).
    \]
    Since $f$ is a linear combination of the generators $x_i^q + x_i$, it follows that $f \in I$, where $I = \langle x_1^q + x_1, x_2^q + x_2, \dots, x_k^q + x_k \rangle$. Thus, by the definition of congruence in the ring $R = \mathbb{F}_q[x_1, \dots, x_k] / I$, we conclude that $f \equiv_R 0$. This proves the first direction of the claim.

    \textbf{Second Direction:} If $f \equiv_R 0$, then $f$ vanishes over every input $\alpha_1, \dots, \alpha_k \in \mathbb{F}_q$.

    Assume that $f \equiv_R 0$. By the definition of congruence in $R$, this means that $f \in I$, where $I = \langle x_1^q + x_1, x_2^q + x_2, \dots, x_k^q + x_k \rangle$. 

    Since $f \in I$, we can express $f$ as a linear combination of the generators of $I$:
    \[
    f = \sum_{i=1}^k h_i (x_i^q + x_i),
    \]
    where $h_1, h_2, \dots, h_k \in \mathbb{F}_q[x_1, \dots, x_k]$. 

    Since $x_i^q + x_i = 0$ for $x_i \in \mathbb{F}_q$, the polynomial $f$ vanishes whenever $x_i \in \mathbb{F}_q$. Thus, for any $\alpha_1, \dots, \alpha_k \in \mathbb{F}_q$, we have:
    \[
    f(\alpha_1, \dots, \alpha_k) = \sum_{i=1}^k h_i (\alpha_i^q + \alpha_i) = 0.
    \]
    This follows because each term $(\alpha_i^q + \alpha_i) = 0$ when $\alpha_i \in \mathbb{F}_q$, and hence the entire sum vanishes.

    Therefore, for all $\alpha_1, \dots, \alpha_k \in \mathbb{F}_q$, we have $f(\alpha_1, \dots, \alpha_k) = 0$, which completes the proof of the second direction.

    Thus, we have shown that $f$ vanishes over every input $\alpha_1, \dots, \alpha_k \in \mathbb{F}_q$ if and only if $f \equiv_R 0$.
\end{proof}
\else
\textcolor{red}{Proof omitted.}
\fi

\begin{example}
    If $f(x_1, x_2) = x_1^8 + x_1 + x_2^8 + x_2$ over $\mathbb{F}_8$, then $f$ vanishes over every input $\alpha_1, \alpha_2 \in \mathbb{F}_8$, and indeed $f \equiv_R 0$.
    
    On the other hand, if $f(x_1, x_2) = x_1^8 + 1 + x_2^8 + x_2$, then $f \equiv_R x_1 + 1$, and clearly $f(0, 1) = 1$.
\end{example}

Combining Claim~\ref{claim:1} and Lemma~\ref{lemma:6}, we deduce the following corollary.
\begin{corollary}\label{cor:7}
    There is a solution for requests $v_1,v_2,\dots,v_k$ if and only if $f\not\equiv_R 0$ if ans only if there is an input $\alpha_1,\alpha_2,\dots,\alpha_k$ such that $f(\alpha_1,\alpha_2,\dots,\alpha_k)\neq 0$.
\end{corollary}

\begin{claim}\label{claim:13}
    It holds that \( |B| > 0 \).
\end{claim}

\begin{proof}
    Assume, for the sake of contradiction, that \( |B| = 0 \).
    This implies that for all \( j \), we have \( g_j = 0 \).
    Consequently, each term in the sum vanishes, which leads to \( f_v = 0 \) for any $v=(v_1,v_2,\dots,v_k)$.
    However, this contradicts Corollary~\ref{cor:10}, which states that if all the \( v_i \)'s are equal to some nonzero \( \beta \in \mathbb{F}_q \), then \( f_v \not\equiv_R 0 \).
\end{proof}

\begin{claim}\label{claim:14}
   There exists $j \in B$ such that for all $i$, it holds that $\deg_{v_i}(g_j) \geq 1$.
\end{claim}

\begin{proof}
    From Claim~\ref{claim:13}, we know that \( |B| > 0 \).
    Assume, for the sake of contradiction, that for all \( j \in B \), there exists at least one index \( i \) such that \( g_j \) is independent of \( v_i \).
    Let \( v = (\beta, \beta, \dots, \beta) \), where \( \beta \) is a nonzero element in \( \mathbb{F}_q \).
    Let \( j \) be an index such that \( g_j(\beta, \beta, \dots, \beta) \neq 0 \). 
    Such an index must exist because, if for every input where all \( v_i \)'s are equal to \( \beta \), the coefficients of \( f_v \) were all zero, then we would have \( f_v \equiv_R 0 \), which contradicts Corollary~\ref{cor:10}.

    Assume without loss of generality that \( g_j \) is independent of \( v_k \). 
    Consider a nonzero \( \gamma \neq \beta \) in \( \mathbb{F}_q \), and let \( w = (\beta, \beta, \dots, \beta, \gamma) \). Then we have
    \[
    g_j(\beta, \beta, \dots, \beta, \gamma) = g_j(\beta, \beta, \dots, \beta, \beta) \neq 0.
    \]
    Therefore, one of the monomials of \( f_w \) does not vanish, implying that \( f_w \not\equiv_R 0 \). 
    However, since \( \sum_{i=1}^k v_i = \beta + \gamma \neq 0 \), by Lemma~\ref{lemma:9}, we must have \( f_w \equiv_R 0 \), reaching a contradiction.
    Thus, it follows that there must exist at least one \( j \in B \) such that the degree of every \( v_i \) in \( g_j \) is at least 1.
\end{proof}

\begin{claim}\label{claim:14_0}
    There exists \( j \in B \) such that \( g_j \in P_k \).
\end{claim}

\begin{proof}
    By Lemma~\ref{lemma:9}, for all \( j \), \( g_j \) vanishes on \( A \).
    By Claim~\ref{claim:14}, there exists \( j \in B \) such that for all \( i \), we have \( \deg_{v_i}(g_j) \geq 1 \).
    Furthermore, by examining \( f_v \) for any \( v \), for all \( i \) and \( j \), we know that
    \[
    \deg_{v_i}(g_j) \leq 2 \cdot (k-1) = 2 \cdot \left( \frac{q}{2} - 1 \right) = q - 2.
    \]
    Thus, by the definition of \( P_k \), we conclude that \( g_j \in P_k \).
\end{proof}

    \begin{lemma}
    There is $g_j\in P_k$ such that
    \[
        g_j(v_1,v_2,\dots,v_k) = v^{q-2}_{1}p(v_2,v_3,\dots,v_k) + h_j(v_1,v_2,\dots,v_k)
    \]
    such that $\deg_{v_1}(h_j)\leq q-3$.
\end{lemma}

\begin{proof}
    Let $g_j$ be a coefficient of the monomial 
    \begin{align*}
        x^{q-2}_1x^{q-4}_2\cdots x^2_{k-1}x^{0}_k
    \end{align*}
    One of his summands is
    \begin{align*}
          v^{q-2}_1v^{q-4}_2\cdots v^{2}_{k-1}v^{0}_{k} \eqdef v^{q-2}_1 p(v_2,v_3,\dots,v_k).
    \end{align*}
    We left to show that there are no other summands of $g_j$ in which a degree of $v_1$ is $q-2$.
    There is only one option to have $v^{q-2}x^{q-2}_1$.
    Given that $v^{q-2}x^{q-2}_1$ is used, there is only one option to have $v^{q-4}x^{q-4}_2$.
    Inductively, there is only one option to have $v^{q-2}_1v^{q-4}_2\cdots v^{2}_{k-1}v^{0}_{k}$ as a coefficient of
\end{proof}

We already know that if $v_1 = v_2$, there is a solution. On the other hand, if $v_1 \neq v_2$, we cannot make two distinct pair sums of elements $GF(4)=\{0, 1, \alpha, \alpha^2\}$ such that the first sum equals $v_1$ and the second sum equals $v_2$.
In general, we will use a notation of $g_j(v_1,v_2,\dots,v_k)$ to denote the coefficients of the monomials of polynomial $f_v$. However, the explicit calculation of $g_j$'s is very difficult. By abuse of notation, we will treat $g_j(v_1,v_2,\dots,v_k)$ as polynomials over $\mathbb{F}_2[v_1,v_2,\dots,v_k]$. In other words, we will treat $v_i$'s as variables instead of constant vectors, and $\beta_i$'s will be used as inputs in such polynomials.

The key idea of the main proof is to establish the following four properties for some $g_j$:
\begin{enumerate}
    \item $g_j \neq 0$ (Note: theoretically, each $g_j$ might vanish, as the characteristic of $\mathbb{F}_q$ is 2).
    \item For all $i$, it holds that $1 \leq \deg_{v_i}(g_j) \leq q-1$ (Note: theoretically, $g_j$ might be independent of some variable $v_i$).
    \item  By Lemma~\ref{lemma:9}, $g_j$ vanishes on every point in the set
    \[
    A = \{ (\beta_1, \beta_2, \dots, \beta_k) \mid \beta_i \in \mathbb{F}_q, \sum_{i=1}^k \beta_i \neq 0 \}.
    \]
    \item $g_j$ can be expressed by
    \begin{align*}
        g_j(v_1,\dots,v_k)=\Big((v_1+\dots+v_k)^{q-1}+1\Big)h_1+\sum^k_{j=2}\Big(v_j^{q-1}+1\Big)h_j,
    \end{align*}
    where      $h_1, \dots, h_k \in \mathbb{F}[x_1, \dots, x_k]$ are polynomials satisfying $\deg(h_j) \leq \deg(g_v) - (q-1)$.
\end{enumerate}
We will prove then that for all nonzero $\beta_i$'s such that $(\beta_1, \beta_2, \dots, \beta_k) \notin A$, $h_1(\beta_1, \beta_2, \dots, \beta_k) \neq 0$.
Finally, we will conclude that for all $\beta \notin A$, $f_\beta \not\equiv_R 0$.

A similar problem was presented by In E. Preissmann and M. Mischler in 2009.
The authors called this problem, ``Seating Couples Around the King's Table".
Given an arbitrary set of $n$ natural numbers $d_0,d_1,\dots,d_{n-1} \in \{1,2,\dots,n\},$ is it always possible to find an involution of $2n + 1$ circularly ordered points having a
unique fixed point and consisting of $n$ disjoint transpositions exchanging respectively two points at circular distance $d_0,d_1,\dots,d_{n-1}$.
They proved that there is a solution if and only if $n=\frac{p-1}{2}$ for some prime $p$.
The proof was simplified by R. N. Karasev in 2011.
The similarity of King's problem to our conjecture is that they should split $\{0,1,\dots, 2n\}$ into $n$ disjoint pairwise couples $x_i,y_i$ where $x_i<y_i$,
where each couple satisfies either $y_i-x_i =d_i$ or $2n-(y_i-x_i)=d_i$, 
while in our case we should split the $2^s$ elements from  $\F_q$ into $2^{s-1}$ disjoint pairwise couples $x_i,y_i\in \F_q$ such that $x_i+y_i = v_i$.
%Following Krasnev's methods, we prove that such decomposition of servers is possible under the $3$ conditions mentioned above.

\begin{theorem} (Preissmann, Mischler)
    Let $p$ be an odd prime and $n=\frac{p-1}{2}$. 
    If in $\F_p$, the nonzero differences $d_1, d_2, \dots , d_n$ are given, then $\F_p\setminus \{0\}$ can be partitioned into disjoint pairs $\{x_i,y_i\}$ such that $x_i+y_i=d_i$ for each $i$.
\end{theorem}

Denote $B = \{j \mid g_j \neq 0\}$.
Let
\[
A = \{(\beta_1, \beta_2, \dots, \beta_k) \mid \beta_i \in \mathbb{F}_q, \sum_{i=1}^k \beta_i \neq 0\}.
\]
Define \( P_k \) as the set of all polynomials \( g(v_1, v_2, \dots, v_k) \in \mathbb{F}_2[v_1,v_2,\dots,v_k] \) such that \( 1 \leq \deg_{v_i}(g) \leq q-2 \) for all \( i \), and \( g(\beta_1, \beta_2, \dots, \beta_k) = 0 \) for all \( (\beta_1, \beta_2, \dots, \beta_k) \in A \).

\begin{claim}\label{claim:14_0}
    For all \( j \in B \), \( g_j \in P_k \).
\end{claim}
\begin{proof}
    First note that by Lemma~\ref{lemma:9}, every $g_j$ vanishes for every point in $A$.
    Assume, without loss of generality, that \( g_j \) is independent of \( v_k \).   
    If \( g_j \) vanishes for every \( \beta_1, \beta_2, \dots, \beta_k \in \mathbb{F}_q \), then by Theorem~\ref{theo:4}, we have
    \[
    g_j(v_1, \dots, v_k) = \sum_{i=1}^k h_i(v_i^q + v_i).
    \]
    Since for all \( i \), \( \deg_{v_i}(g_j) \leq q-2 \), this would only be possible if \( g_j = 0 \), which contradicts the assumption that \( j \in B \) and \( g_j \neq 0 \).
    Thus, there must exist \( \beta_1, \beta_2, \dots, \beta_k \in \mathbb{F}_q \) such that \( g_j(\beta_1, \beta_2, \dots, \beta_k) \neq 0 \).
    By Lemma~\ref{lemma:9}, we know that \( \sum_{i=1}^k \beta_i = 0 \).
    However, since \( g_j \) is independent of \( v_k \), there exists a non-zero \( \gamma \neq \beta_k \) such that
    \[
    g_j(\beta_1, \beta_2, \dots, \beta_{k-1}, \beta_k) = g_j(\beta_1, \beta_2, \dots, \beta_{k-1}, \gamma) \neq 0.
    \]
    Therefore, for \( w = (\beta_1, \beta_2, \dots, \beta_{k-1}, \gamma) \), we obtain \( f_w \not\equiv_R 0 \), even though
    \[
    \sum_{i=1}^{k-1} \beta_i + \gamma = \beta_k + \gamma \neq 0.
    \]
    This contradicts Lemma~\ref{lemma:9}, which asserts that \( f_w \not\equiv_R 0 \) only if \( \sum_{i=1}^k \beta_i = 0 \).
   Hence, \( g_j \) must depend on all the variables \( v_i \), and therefore, $g_j\in P_k$.
\end{proof}

 Next, we will prove that there is  $g_j\in P_k$ such that $v^{q-2}_1$ is a monomial of $g_j$.
 \begin{lemma}
         There is $g_j\in P_k$ such that $v^{q-2}_1$ is its monomial. 
 \end{lemma}

 \begin{proof}
    Note that the degree of $f$ is 
    \begin{align*}
        4\cdot \binom{k}{2} = 2k(k-1)=q(\frac{q}{2}-1).
    \end{align*}
    Moreover, for every monomial 
    \begin{align*}
        g_j(v_1,v_2,\dots,v_k)x^{i_1}_{1}x^{i_2}_{2}\cdots x^{i_k}_{k}
    \end{align*}
    $\deg(g_j)=\deg(f) - \sum^{k}_{j=1}i_j$.
     Observe the following monomial of $f$:
     \begin{align*}
         g_j(v_1,v_2,\dots,v_k)x_1^{q-2}\Pi^{k}_{i=1}x_i^{2q-4i}.
     \end{align*}
     By the observation above
     \begin{align*}
        \deg(g_j)&=\deg(f) -  (q-2)-\sum^{q/2}_{i=2}(2q-4i) \\
           &=q(\frac{q}{2}-1) - (q-2)-\sum^{q/2}_{i=2}(2q-4i) \\
           & = \frac{1}{2}q^2-q-(q-2)-\frac{1}{2}q^2+3q-4= q-2.
     \end{align*}
    The  $v^{q-2}_1x^{q-2}_1$ part is uniquely obtained from 
    \begin{align*}
        \Pi^k_{j=2}(x_1+x_j)(x_1+x_j+v_1)(x_1+x_j+v_2)(x_1+x_j+v_1+v_j),
    \end{align*}
    since there are $2(k-1)=q-2$ appearances of $v_1$ and after choosing them, we left with $2(k-1)=q-2$ appearances of $x_1$.
     The $x^{2q-i}_i$ part is also uniquely obtained from the remaining part
    \begin{align*}
        \Pi^k_{j=i+1}(x_i+x_j)(x_i+x_j+v_1)(x_i+x_j+v_2)(x_i+x_j+v_i+v_j),
    \end{align*}
     since there are $4(k-(i+1)+1)=2q-4i$ appearances of $x_i$.
 \end{proof}

\begin{lemma}\label{lemma:17}
    For every \( (\beta_1, \beta_2, \dots, \beta_k) \notin A \) (where the \( \beta_i \)'s are non-zero), there exists a \( j \) such that \( g_j(\beta_1, \beta_2, \dots, \beta_k) \neq 0 \).
\end{lemma}

\begin{proof}
    Let \( g_j \in P_k \). Assume, for the sake of contradiction, that there exists \( (\beta_1, \beta_2, \dots, \beta_k) \notin A \) such that \( g_j(\beta_1, \beta_2, \dots, \beta_k) = 0 \).
    Since \( g_j \in P_k \), it follows that \( g_j \) vanishes for any point of the form \( (\gamma, \beta_2, \dots, \beta_k) \), where \( \gamma \in \mathbb{F}_q \). Thus, we define the polynomial
    \[
    g'(v_1) = g_j(v_1, \beta_2, \dots, \beta_k).
    \]
    Since \( g_j \in P_k \), for all \( j \) and \( i \), we have \( 1 \leq \deg_{v_i}(g_j) \leq q-2 \), which implies that \( 1 \leq \deg(g') \leq q-2 \) (\textcolor{red}{Not true, $g'$ might be of degree $0$}). In particular, the degree of \( g' \) is at least 1 and at most \( q-2 \).
    Moreover, since \( g_j \) vanishes at all points of the form \( (\gamma, \beta_2, \dots, \beta_k) \), it follows that \( g'(v_1) \) vanishes for every \( \gamma \in \mathbb{F}_q \). Therefore, \( g'(v_1) \) must be either the zero polynomial or a polynomial of degree at least \( q \).
    But this is a contradiction, since  \( 1 \leq \deg(g') \leq q-2 \).
    Therefore, there exists a \( j \) such that \( g_j(\beta_1, \beta_2, \dots, \beta_k) \neq 0 \), as required.
\end{proof}

\begin{theorem}
    If $k = 2^{s-1}$ and $\sum_{i=1}^{2^{s-1}} v_i = 0$, then $f \not\equiv_R 0$.
\end{theorem}

\begin{proof}
    From Lemma~\ref{lemma:17}, for every non-zero input $(\beta_1,\beta_2,\dots,\beta_k) \not\in A$ in the variables $v_1,v_2,\dots,v_k$, there is at least one $j$ such that $g_j(\beta_1,\beta_2,\dots,\beta_k)\neq 0$.
    By the definition of $f$, it is deduced that $f\not\equiv_R 0$.
\end{proof}

%\ifshowproofs

%\else
%\textcolor{red}{Proof omitted.}
%\fi

\section{Conclusion}
\label{sec:conc}

In this work, we studied the parameters of duplication-correcting codes, both in the reverse-complement and the palindromic settings. We determined $A_q(n;*)^{\rc}_1$ and constructed optimal codes. We then showed that the coding capacity, $R_q(n;*)^{\rc}_k$, $k\geq 2$, is vanishing. A similar result for $k\geq 2$ and palindromic duplication was also proved.

Other open questions remain. While having a vanishing coding capacity is disappointing, we do not yet know how to construct $(n,M;t)^{\rc}_k$ and $(n,M;t)^{\pal}_k$ for a finite $t$. If $t=1$, namely, a single duplication error is to be corrected, then~\cite{BenSch22} constructed reverse-complement duplication-correcting codes when $k$ is odd. More generally, we can use a burst-insertion-correcting code, e.g., \cite{SchWacGabYaa17,BitHanPolVor21,WanTanSimGabFar24}. However, when $t\geq 2$ no solution is known except using a $tk$-insertion correcting code, which is, most likely, far from optimal. We leave finding such codes, determining the optimal code size as well as the coding capacity, for future work.